\begin{document}
\title{Layered Space-Time Index Coding}
\author{Yu-Chih Huang, Yi Hong, Emanuele Viterbo, and Lakshmi Natarajan

\thanks{Y.-C. Huang is with the Department of Communication Engineering, National Taipei University, 237 Sanxia District, New Taipei City, Taiwan (email: ychuang@mail.ntpu.edu.tw).}

\thanks{Y. Hong and E. Viterbo are with the Department of Electrical and Computer System Engineering, Monash University, VIC 3800, Australia (e-mail: \{yi.hong, emanuele.viterbo\}@monash.edu).}

\thanks{L. Natarajan is with the Department of Electrical Engineering, Indian Institute of Technology Hyderabad, Sangareddy 502285, India (e-mail: lakshminatarajan@iith.ac.in).}
}

\maketitle

\begin{abstract}
Multicasting $K$ independent messages via multiple-input multiple-output (MIMO) channels to multiple users where each user already has a subset of messages as side information is studied. A general framework of constructing layered space-time index coding (LSTIC) from a large class of space-time block codes (STBC), including perfect STBC, is proposed. We analyze the proposed LSTIC and show that it provides minimum determinant gains that are exponential with the amount of information contained in the side information for {\it any possible side information}. When constructed over a perfect STBC, the proposed LSTIC is itself a perfect STBC and hence many desired properties are preserved. To illustrate, we construct LSTIC over the following well-known STBCs: Golden code; $3\times 3$, $4\times 4$, $6\times 6$ perfect STBCs; and Alamouti code. Simulation results show that the obtained side information gain can be well predicted by our analysis.
\end{abstract}

\begin{IEEEkeywords}
Index coding, broadcast channels, side information, space-time block codes, MIMO channel.
\end{IEEEkeywords}


\section{Introduction}\label{sec:intro}
The index coding problem \cite{birk98,birk06} studies the problem of optimally broadcasting independent messages via noiseless links to multiple receivers where each receiver demands a subset of messages and already has another subset of messages as side information. The side information at a receiver is described by an index set and could be obtained from various means depending on the application. For example, in retransmissions in broadcast channel \cite{birk98}, the side information is decoded from the previous received signals; in the coded caching technique \cite{niesen14,niesen16}, the side information is prefetched into users' local cache memories during off-peak hours; and in wireless relay networks \cite{oechtering08,KramerShamai07,ccWang13}, the side information is the users' own data and/or is decoded/overheard from the previous sessions.

At the physical layer, the index coding problem can in fact be modeled as the noisy broadcast channel with receiver side information. This problem has recently been investigated from two different perspectives and most of the prior works can be categorized accordingly into two groups. The first one including \cite{oechtering08,KramerShamai07,wu07,tie09,sima14,LawrenceOng15} focuses on characterizing the capacity region of the AWGN broadcast channel with message side information. The capacity region of the two-user broadcast channel with receiver message side information has been completely characterized \cite{oechtering08,wu07}. However, since the number of possible index sets increases exponentially with the number of users in the network, the problem quickly becomes intractable as the number of users increases. As a result, the capacity region for the three-user case has not been fully characterized for some index sets \cite{tie09,sima14,LawrenceOng15} and our knowledge about more than three users is limited to some special cases \cite{tuncel06,natarajan17codedSI}.

The second category including \cite{Rajan16psk,viterbo15_index_QAM,viterbo14index,huang15_lic,huang17GCIC} considers designing codes/constellations that possess some desired properties in the finite dimension regime. The main objective is to design codes such that the probability of error will decrease by an amount that is proportional to the amount of information contained in the side information. In \cite{Rajan16psk}, Mahesh and Rajan consider the AWGN broadcast channel and assume that the transmitter knows all the index sets, i.e., the side information configuration is available at the transmitter. The scheme proposed in \cite{Rajan16psk} consists of a linear index coding followed by an algorithm that maps coded bits onto a phase shift keying (PSK) modulation. It is shown in \cite{Rajan16psk} that this scheme indeed can provide a reduction in probability of error proportional to the amount of side information.

Another line of research within this category (\cite{viterbo15_index_QAM,viterbo14index,huang15_lic,huang17GCIC}), which seamlessly scales to any number of users, considers the scenario where the transmitter is oblivious of the index sets. This enables to handle large numbers of users, when the index sets to feedback to the transmitter  require excessive resources and/or the complexity of designing the specific index code becomes excessive.  The objective then becomes designing coding schemes that are fair to every possible index set. As a starting point, only the multicasting case is considered in \cite{viterbo15_index_QAM,viterbo14index,huang15_lic,huang17GCIC} where all the receivers demand all the messages.

In \cite{viterbo15_index_QAM} and \cite{viterbo14index}, Natarajan \textit{et al.} study code design for the AWGN broadcast channel where minimum distance is one of the most crucial parameters to be maximized. They first propose a coding scheme in \cite{viterbo15_index_QAM} by partitioning multi-dimensional pulse amplitude modulation (PAM) into subsets via computer search for up to five messages with the message size up to $2^6$. Exploiting the algebraic structure induced by the Chinese remainder theorem (CRT), a novel coding scheme, lattice index coding, is then proposed in \cite{viterbo14index} to accommodate {\it any number of messages} with message sizes relatively prime to each other. Both the schemes in \cite{viterbo15_index_QAM} and \cite{viterbo14index} are shown to provide gains in minimum distance exponential with the rate of the side information, for any index set.

In \cite{huang15_lic}, Huang considers the same multicasting problem with message side information, where each link experiences a Rayleigh fading channel on top of the AWGN noise. It is well-known that in contrast to the AWGN channel, maximizing minimum distance alone is far from enough for the Rayleigh fading channel and the minimum product distance dominates the performance \cite{joseph96, oggier04, tse_viswananth}. The lattice index coding scheme proposed in \cite{huang15_lic} generalizes the idea of \cite{viterbo14index} from some famous principal ideal domains to any ring of algebraic integers. It is shown that codes thus constructed over rings of algebraic integers of totally real number fields provide gains in minimum product distance that is exponential with the rate of the side information for any index set. The multicasting problem with message side information is then considered in \cite{huang17GCIC} under the $2\times 2$ MIMO setting where the transmitter and the receivers are equipped with two antennas. For such a MIMO setting, the minimum determinant of the code serves as one of the most important parameters to be maximized \cite{oggier07, tse_viswananth} and algebraic space-time block codes (STBC) constructed from cyclic division algebras \cite{alamouti98,sethuraman03stbc,belfiore05GC,Oggier06perfect} are a class of codes that possess many desired properties. Since CRT does not hold for non-commutative rings such as cyclic division algebras, the trick used in \cite{viterbo14index} and \cite{huang15_lic} does not work here in general. In \cite{huang17GCIC}, the problem is circumvented by using the bijective mapping between the Golden algebra and a commutative ring found in \cite{viterbo05partitionGC} together with some special ideals whose group structure is preserved by the mapping. As a result, we successfully construct Golden-coded index coding from Golden code, a subclass of perfect codes for the case with two transmitter and receiver antennas, and show that the minimum determinant increases exponentially with the rate of the side information for any index set. 

\subsection{Contributions}
In this work, we consider the problem of multicasting over MIMO channel with message side information. We propose {\em layered space-time index coding} (LSTIC), a general framework of constructing lattice space-time index codes from algebraic STBC. We exploit the algebraic structure of these codes to encode the different messages into subcodes, which preserve all the good properties of the STBC, such as non-vanishing determinant and power efficiency. 

Any receiver that has some of the messages as side information will be decoding a subcode that has an improved performance in terms of error probability. We provide a lower bound on the {\em side information gain} for any side information configuration. The side information gain essentially measures the SNR reduction (normalized by the rate of the side information)  to achieve the same error probability, given the side information. This lower bound implies an exponential increase of minimum determinant and is universal in the sense that it holds for {\it any possible index set}.

We apply the proposed framework with the Golden code, $3\times 3$, $4\times 4$, $6\times 6$ perfect STBCs, and Alamouti code, and show that our analysis well predicts the actual side information gains obtained from simulations. For each of the above codes, we also provide a table of the corresponding prime ideal factorizations for $p<100$, over which the LSTIC can be constructed according to the desired message sizes.

We note that the technique used in \cite{huang17GCIC} requires the code to be constructed over prime ideals whose group structure are preserved by the bijective mapping of \cite{viterbo05partitionGC} and thereby limits the possible message sizes. In contrast to the Golden-coded index coding in \cite{huang17GCIC} working only for Golden code with the message sizes confined to some particular prime powers, the proposed LSTIC is quite general in the following senses: 1) it works for a large class of STBC constructed from cyclic division algebras; and 2) it has less restriction on the message size. 
We would like to emphasize here that when specialized to the Golden code, the proposed LSTIC is not a special case of the Golden-coded index coding in \cite{huang17GCIC} and vice versa, due to the different message sizes.


\subsection{Notations}
Throughout the paper, the following notations are used. Matrices are written in capital boldface, for example $\mathbf{X}$. Let $\msf{i}\defeq \sqrt{-1}$ and $\omega\defeq e^{\msf{i} 2\pi/3}$ be the primitive cube root of unity. We denote by $\Z$, $\Zi \defeq \{a+b\msf{i}|a,b\in\Z\}$, and $\Zw \defeq \{a + b\omega|a,b\in\Z\}$ the ring of integers, the ring of Gaussian integers, and the ring of Eisenstein integers, respectively. Also, we denote by $\Q$, $\mbb{R}$, and $\mbb{C}$ the field of rational numbers, the field of real numbers, and the field of complex numbers, respectively.

\subsection{Organization}
The rest of the paper is organized as follows. In Section~\ref{sec:problem}, we state the problem of physical-layer index coding over MIMO channel and formally define the side information gain, the performance measure that we will use throughout the paper. Background knowledge on algebra, algebraic number theory, and cyclic division algebra is given in Section~\ref{sec:background}. The LSTIC is then proposed and analyzed in Section~\ref{sec:LSTIC}. In Sections~\ref{sec:L_GIC}-\ref{sec:AlaIC}, we construct LSTIC over Golden code, $3\times 3$ perfect STBC, $4\times 4$ perfect STBC, $6\times 6$ perfect STBC, and Alamouti code. We then conclude the paper in Section~\ref{sec:conclude}.

\section{Problem Statement}\label{sec:problem}
Consider the network shown in Fig.~\ref{fig:system_model} where there is a base station broadcasting messages to $L$ users. The base station is equipped with $n_t$ antennas and each user is equipped with $n_r$ antennas. There are $K$ independent messages $\{w_1,\ldots,w_K\}$ collocated at the base station and each $w_k$ is uniformly distributed over $\{1,\ldots,M_k\}$. Each user demands all the $K$ messages and already has a subset of the messages as side information. For user $\ell$, we denote by $\mc{S}_{\ell}\subseteq\{1,\ldots,K\}$ the index set and the side information at the user is $w_{\mc{S}_{\ell}}\defeq \{w_s | s\in\mc{S}_{\ell}\}$. The base station encodes the messages across space ($n_t$ antennas) and time ($T$ symbol durations) into an $n_t\times T$ codeword matrix $\mbf{X}$ where each entry $x_{jt}\in\mbb{C}$ and the codeword is subject to the power constraint $\mbb{E}[\|\mathbf{X}\|^2]=n_tT$. In a space-time code, each codeword $\mathbf{X}$ is used to transmit $r$ information-bearing real symbols. We denote by $R_k=\log_2(M_k)/r$ the rate of the message $w_k$ measured in bits per real symbol. The signal model between the base station and the user $\ell$ is given by
\begin{equation*}
    \mathbf{Y}_{\ell} = \mathbf{H}_{\ell}\mathbf{X} + \mathbf{Z}_{\ell},
\end{equation*}
where $\mathbf{Y}_{\ell}$ is of size $n_r\times T$, $\mbf{H}_{\ell}$ is a random $n_r\times n_t$ matrix with each element i.i.d. $\sim\mc{CN}(0,1)$, and $\mbf{Z}_{\ell}$ is a random $n_r\times T$ matrix with each element i.i.d. $\sim\mc{CN}(0,\sigma^2_l)$. Each user is assumed to know the channel matrix $\mbf{H}_{\ell}$ associated with its received signal, i.e., channel state information at the receiver is assumed. The signal-to-noise power ratio (SNR) is defined as $\mathrm{SNR}_l\defeq \frac{n_t}{\sigma^2_l}$.
\begin{figure}
    \centering
    \includegraphics[width=3.5in]{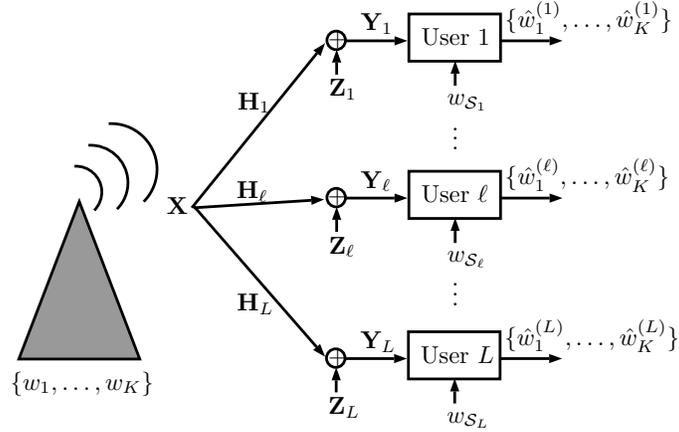}
    \caption{Multicasting $\{w_1,\ldots,w_{K}\}$ over MIMO channel to $L$ receivers where each receiver $\ell\in\{1,\ldots,L\}$ has a subset of messages $w_{\mc{S}_{\ell}}$ as side information.}
    \label{fig:system_model}
\end{figure}

Let $\phi$ be a bijective encoding function that maps the messages $(w_1,\ldots,w_K)$ to the transmitted signal $\mbf{X}$. The codebook $\mc{C}$ is the collection of codewords given by
\begin{equation*}
    \mc{C} = \left\{\mbf{X}=\phi(w_1,\ldots,w_K)| w_k\in\{1,\ldots,M_k\}, \forall k\right\}.
\end{equation*}
Based on the received signal $\mbf{Y}_{\ell}$ and side information $w_{\mc{S}_{\ell}}$, the user ${\ell}$ forms $\{\hat{w}_1^{(\ell)},\ldots,\hat{w}_K^{(\ell)}\}$ (or equivalently $\hat{\mbf{X}}^{(\ell)}$) an estimate of $\{w_1,\ldots,w_K\}$ (or equivalently $\mbf{X}$). The probability of error is defined as
\begin{align*}
    p_e^{(\ell)} &\defeq \mathrm{Pr}\{ \{w_1,\ldots,w_K\} \neq \{\hat{w}_1^{(\ell)},\ldots,\hat{w}_K^{(\ell)}\}\} \nn \\
    &=\mathrm{Pr}\{ \mbf{X}\neq\hat{\mbf{X}}^{(\ell)}\},
\end{align*}
where the second expression is often called the codeword error rate (CER). We emphasize here that the index set $\mc{S}_{\ell}$ can be any subset of $\{1,\ldots,K\}$ and is oblivious to the base station. This makes the problem of every $\ell$ identical for the base station. We therefore focus on a generic user and drop the subscript (superscript in some cases) $\ell$. The dummy variable $\ell$ is then released for later use.

Following \cite{oggier07}, we define $\mathbf{A}\defeq (\mathbf{X}-\mathbf{X}')(\mathbf{X}-\mathbf{X}')^{\dag}$ for any pair of codeword matrices $\mathbf{X},\mathbf{X}'\in\mc{C}$. Let $r$ be the rank of $\mathbf{A}$. For the generic user with $\mc{S}=\emptyset$, in the high SNR regime, the probability of mistaking $\mbf{X}'$ for $\mbf{X}$ can be bounded as
\begin{equation*}
    \mathrm{Pr}(\mathbf{X}\rightarrow\mathbf{X}') \leq \left(\frac{\mathrm{SNR}\Delta^{1/r}}{4 n_t}\right)^{-rn_r},
\end{equation*}
where $\Delta =\prod_{m=1}^r\lambda_m$ with $\lambda_1,\ldots,\lambda_m$ being the non-zero eigenvalues of $\mathbf{A}$. Moreover, for full rank codes, i.e., $r=n_t$ and
\begin{equation*}
    \Delta =\prod_{m=1}^{n_t}\lambda_m = \det(\mathbf{A})\neq 0,
\end{equation*}
we define the {\em minimum determinant} of $\mc{C}$ as follows,
\begin{equation*}
    \delta(\mc{C}) \defeq \min_{\mathbf{X}\neq\mathbf{X'}\in\mc{C}}\det(\mathbf{A}).
\end{equation*}
If $\mc{C}$ is carved from a lattice $\Lambda$ \cite{conway1999sphere}, we have
\begin{equation}\label{eqn:min_det_C_lattice}
    \delta(\mc{C}) = \min_{\mathbf{X}\neq\mathbf{0}\in \Lambda} \det(\mathbf{X})^2.
\end{equation}
To estimate the probability of error more accurately, let us define $N_{\mathbf{X}}$ the number of codewords $\mathbf{X'}\in\mc{C}$ resulting in $\det(\mbf{A})=\delta(\mc{C})$ and define
\begin{equation}\label{eqn:N_avg}
    N_{\mc{C}}\defeq \frac{1}{|\mc{C}|}\sum_{\mathbf{X}\in\mc{C}} N_{\mathbf{X}},
\end{equation}
the average of $N_{\mbf{X}}$ over $\mbf{X}\in\mc{C}$. For a STBC carved from a lattice, we can now approximate the probability of error as
\begin{align}\label{eqn:union_bound}
    p_e &= \frac{1}{|\mc{C}|}\sum_{\mathbf{X}\in\mc{C}} \Pp\left(\bigcup_{\mathbf{X}'\neq\mathbf{X}} \mathbf{X}\rightarrow\mathbf{X}'\right) \nonumber \\
    &\overset{(a)}{\approx} \frac{1}{|\mc{C}|} \sum_{\mathbf{X} \in\mc{C}} N_{\mathbf{X}}\left(\frac{\mathrm{SNR} \delta(\mc{C})^{1/n_t}}{4 n_t}\right)^{-n_t n_r} \nonumber \\
    &= N_{\mc{C}}\left(\frac{\mathrm{SNR} \delta(\mc{C})^{1/n_t}}{4 n_t}\right)^{-n_t n_r},
\end{align}
where the approximation in ($a$) will become quite accurate in the high SNR regime.

Having had the approximation in \eqref{eqn:union_bound}, we can now follow \cite{huang17GCIC} to derive the side information gain as follows. We first note that with the knowledge of side information $w_s = v_s, \forall s\in\mc{S}$, the generic user can throw away all the codewords that do not correspond to this side information. The codebook then becomes
\begin{equation*}
    \mc{C}_{\mc{S}}\defeq \left\{\mbf{X}=\phi(d_1,\ldots,d_K)\left|
                                                              \begin{array}{ll}
                                                              d_k = v_k  , & \hbox{$k\in\mc{S}$;} \\
                                                               d_k\in \{1,\ldots,M_k\}  , & \hbox{otherwise.}
                                                              \end{array}
                                                             \right.\right\},
\end{equation*}
a subcode of $\mc{C}$. Since $\mc{C}_{\mc{S}}\subseteq \mc{C}$, the minimum determinant of $\mc{C}_{\mc{S}}$, $\delta(\mc{C}_{\mc{S}})$, will be no less than $\delta(\mc{C})$. Let us now see how gains in minimum determinant can be translated into SNR gains. Following \cite{huang17GCIC}, we let $\SNR$ and $\SNR_{\mc{S}}$ be the SNR required for the codebooks $\mc{C}$ and $\mc{C}_{\mc{S}}$, respectively, to achieve a same error probability $p_e$. Then \eqref{eqn:union_bound} says that
\begin{align}
    &N_{\mc{C}}\left(\frac{\mathrm{SNR} \delta(\mc{C})^{1/n_t}}{4 n_t}\right)^{-n_t n_r} \approx N_{\mc{C}_{\mc{S}}}\left(\frac{\mathrm{SNR}_{\mc{S}} \delta(\mc{C}_{\mc{S}})^{1/n_t}}{4 n_t}\right)^{-n_t n_r} \nonumber \\
    &(\Leftrightarrow)\quad 10\log_{10}(\mathrm{SNR}) - 10\log_{10}(\mathrm{SNR}_{\mc{S}}) \approx \nonumber \\
    &\hspace{1.5cm} \frac{1}{n_tn_r}10\log_{10}\left(\frac{N_{\mc{C}}}{N_{\mc{C}_{\mc{S}}}}\right) + \frac{1}{n_t}10\log_{10}\left(\frac{\delta(\mc{C}_{\mc{S}})}{\delta(\mc{C})}\right), \label{eqn:snr_gain_exact}
\end{align}
which represents the SNR gain in dB provided by the side information $w_{\mc{S}}$. As mentioned in \cite{huang17GCIC} and many other work in the space-time code literature, it is in general not an easy task to keep tracking both $N_{\mc{C}_{\mc{S}}}$ and $\delta(\mc{C}_{\mc{S}})$ for lattice codes; we thereby focus solely on $\delta(\mc{C}_{\mc{S}})$ as our design guideline and define the SNR gain as $10\log_{10}\left(\delta(\mc{C}_{\mc{S}})/\delta(\mc{C}) \right)^{n_t}$ dB. To get a fair comparison for every possible side information, we then normalize this side information gain by the rate of the side information and define the normalized side information gain as
\begin{equation}\label{eqn:side_info_gain}
    \Gamma(\mc{C},\mc{S})\defeq \frac{10\log_{10}\left(\frac{\delta(\mc{C}_{\mc{S}})}{\delta(\mc{C})}\right)}{n_t R_{\mc{S}}},
\end{equation}
where the rate of the side information is defined as $R_{\mc{S}}\defeq \sum_{s\in\mc{S}} R_s$ and is measured in bits per real symbol, which makes the normalized side information gain having the unit ``dB/bits per real symbol". The side information gain essentially serves as an approximation of the SNR gain provided by side information $w_{\mc{S}}$, normalized by the rate of $w_{\mc{S}}$. We note that involving the first term of \eqref{eqn:snr_gain_exact} into the definition of side information gain results in a better approximation. Hence, although we use \eqref{eqn:side_info_gain} as the design guideline throughout the paper, \eqref{eqn:snr_gain_exact} is also used to confirm the simulation results.

\section{Background}\label{sec:background}
In this section, we first review basic knowledge including algebra and algebraic number theory. We then focus on cyclic division algebra and its connection to lattice STBC. To make the paper concise, we only review the minimum required background for understanding the discussion that follows. For details, please refer, for example, to \cite{Hungerford74,oggier04,oggier07,StewardTall,lang94}.

\subsection{Algebra}
Let $\mc{R}$ be a {\em commutative ring} equipped with two operations addition $+$ and multiplication $\cdot$. 
An {\em ideal} $\mfk{I}$ of $\mc{R}$ is an additive subgroup of $\mc{R}$ with respect to $+$ that absorbs the multiplication of $\mc{R}$, i.e., it satisfies $a\cdot r\in\mfk{I}$ for $a\in\mfk{I}$ and $r\in\mc{R}$. An ideal $\mfk{I}$ is a {\em principal ideal} if it can be generated by a singleton, i.e., $\mfk{I}=a\mc{R}$ for some $a\in\mc{R}$. A {\em proper ideal} $\mfk{I}$ is an ideal that is at the same time, a proper subset of $\mc{R}$, i.e., $\emptyset\neq\mfk{I}\subset\mc{R}$. 

For an ideal $\mfk{I}$ and any two elements $a, b\in\mc{R}$, $a$ is congruent to $b$ \textit{modulo} $\mfk{I}$ if and only if $a-b\in\mfk{I}$, which defines an equivalence relation. The {\em quotient ring} $\mc{R}/\mfk{I}$ of $\mc{R}$ by $\mfk{I}$ is the collection of equivalence classes with addition and multiplication defined as the original ones followed by modulo $\mfk{I}$ operation as follows,
\begin{align*}
    (a+\mfk{I})+(b+\mfk{I}) &= (a+b)+\mfk{I}, \text{~and} \\
    (a+\mfk{I})\cdot(b+\mfk{I}) &= (a\cdot b)+\mfk{I},
\end{align*}
respectively. A {\em prime ideal} $\mfk{p}$ of $\mc{R}$ is a proper ideal satisfying that whenever $ab\in\mfk{p}$ for $a, b\in\mc{R}$, then either $a\in\mfk{p}$ or $b\in\mfk{p}$. 
We now define the {\em sum and product of ideals}. Let $\mfk{I}_1$ and $\mfk{I}_2$ be two ideals of $\mc{R}$, the sum of two ideals is itself an ideal and is defined as
\begin{equation*}
    \mfk{I}_1+\mfk{I}_2 \defeq \left\{a+b: a\in\mfk{I}_1, b\in\mfk{I}_2\right\}.
\end{equation*}
The product of $\mfk{I}_1$ and $\mfk{I}_2$ is again an ideal and is defined as
\begin{equation*}
    \mfk{I}_1\mfk{I}_2 \defeq \left\{ \sum_{i=1}^{n} a_i b_i: a_i\in\mfk{I}_1, b_i\in\mfk{I}_2, n\in\mbb{N} \right\}.
\end{equation*}
In general, $\mfk{I}_1\mfk{I}_2 \subseteq \mfk{I}_1 \cap \mfk{I}_2$. Two ideals are said to be \textit{relatively prime} if $\mc{R} = \mfk{I}_1+\mfk{I}_2$. When $\mfk{I}_1$ and $\mfk{I}_2$ are relatively prime, we further have $\mfk{I}_1\mfk{I}_2 = \mfk{I}_1\cap\mfk{I}_2$. We say $\mfk{I}_1$ divides $\mfk{I}_2$, denoted as $\mfk{I}_1|\mfk{I}_2$, if $\mfk{I}_2=\mfk{I}_1\mfk{I_3}$ for some ideal $\mfk{I}_3$ and consequently $\mfk{I}_2\subseteq \mfk{I}_1$.


Consider two commutative rings $\mc{R}_1$ and $\mc{R}_2$ with two operations $(+,\cdot)$ and $(\oplus,\odot)$, respectively. A {\em ring homomorphism} between $\mc{R}_1$ and $\mc{R}_2$ is a function $\sigma:\mc{R}_1\rightarrow \mc{R}_2$ such that
\begin{align*}
    \sigma(a + b) &= \sigma(a) \oplus \sigma(b),~\forall a,b\in\mc{R}_1, \\
    \sigma(a\cdot b) &= \sigma(a)\odot \sigma(b),~\forall a,b\in\mc{R}_1.
\end{align*}
In other words, a ring homomorphism preserves the ring structure. A homomorphism is a monomorphism if it is injective and is an isomorphism if it is bijective. Moreover, an isomorphism $\sigma:\mc{R}_1\rightarrow \mc{R}_1$ is called automorphism.

We now review two classical results in ring theory whose proofs can be found in a standard textbook.
\begin{lemma}[{Second isomorphism theorem \cite[Theorem 2.12]{Hungerford74}}]
Let $\mc{R}$ be a commutative ring, $\mfk{I}_1$ and $\mfk{I}_2$ be two ideals. We have the following isomorphism,
\begin{equation*}
    \mfk{I}_1/(\mfk{I}_1\cap\mfk{I}_2) \cong (\mfk{I}_1+\mfk{I}_2)/\mfk{I}_2.
\end{equation*}
\end{lemma}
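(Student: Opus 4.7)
The plan is to prove this via the First Isomorphism Theorem for rings, which is the standard route. First I would define the candidate map $\phi\from \mfk{I}_1 \to (\mfk{I}_1+\mfk{I}_2)/\mfk{I}_2$ by $\phi(a) = a + \mfk{I}_2$. Note that this requires interpreting $\mfk{I}_1$ as a (not necessarily unital) subring of $\mc{R}$, which is fine since an ideal is in particular closed under both $+$ and $\cdot$; likewise, $\mfk{I}_1+\mfk{I}_2$ is an ideal (as recalled earlier in this section), so the codomain is a well-defined quotient ring.

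The next step is to check that $\phi$ is a ring homomorphism. This is immediate from the definition of addition and multiplication in the quotient ring $(\mfk{I}_1+\mfk{I}_2)/\mfk{I}_2$: for $a,b\in\mfk{I}_1$, $\phi(a+b) = (a+b)+\mfk{I}_2 = (a+\mfk{I}_2)+(b+\mfk{I}_2) = \phi(a)+\phi(b)$, and similarly for the multiplicative part. After that, I would verify surjectivity: every coset in $(\mfk{I}_1+\mfk{I}_2)/\mfk{I}_2$ has a representative of the form $a+b$ with $a\in\mfk{I}_1,\ b\in\mfk{I}_2$, and since $b\in\mfk{I}_2$ we have $(a+b)+\mfk{I}_2 = a+\mfk{I}_2 = \phi(a)$.

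Then I would compute the kernel. We have $\phi(a) = 0 + \mfk{I}_2$ if and only if $a\in\mfk{I}_2$; combined with $a\in\mfk{I}_1$, this is exactly $a\in\mfk{I}_1\cap\mfk{I}_2$. Hence $\ker\phi = \mfk{I}_1\cap\mfk{I}_2$, which incidentally confirms that $\mfk{I}_1\cap\mfk{I}_2$ is an ideal of $\mfk{I}_1$ (and indeed of $\mc{R}$). Applying the First Isomorphism Theorem to $\phi$ then yields
\begin{equation*}
    \mfk{I}_1/(\mfk{I}_1\cap\mfk{I}_2) \;=\; \mfk{I}_1/\ker\phi \;\cong\; \mathrm{Im}(\phi) \;=\; (\mfk{I}_1+\mfk{I}_2)/\mfk{I}_2,
\end{equation*}
which is the desired isomorphism.

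There is no real obstacle here: the only point requiring a moment of care is that $\mfk{I}_1$ lacks a multiplicative identity, so one must invoke the First Isomorphism Theorem in the category of rings-without-unity (equivalently, treat $\phi$ as a homomorphism of additive groups that also respects multiplication). This is standard and does not affect the argument, so the proof is essentially a short diagram chase.
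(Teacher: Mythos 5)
Your proof is correct and is the standard argument via the First Isomorphism Theorem, with the map $\phi\from \mfk{I}_1 \to (\mfk{I}_1+\mfk{I}_2)/\mfk{I}_2$, $a\mapsto a+\mfk{I}_2$; the surjectivity and kernel computations are right, and your remark about working with non-unital rings (or, equivalently, treating $\phi$ as an additive-group homomorphism that also respects multiplication) addresses the only subtle point. The paper itself does not prove this lemma --- it cites it directly from Hungerford as a classical result whose proof ``can be found in a standard textbook'' --- and the argument you give is precisely that textbook proof, so there is no divergence in approach to report.
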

In fact, the second isomorphism theorem holds for the more general case where $\mfk{I}_1$ is only a subring and not necessarily an ideal.

\begin{lemma}[{Chinese remainder theorem \cite[Corollary 2.27]{Hungerford74}}]
Let $\mfk{I}_1,\ldots,\mfk{I}_n$ be ideals of a commutative ring $\mc{R}$. Moreover, $\mfk{I}_1,\ldots,\mfk{I}_n$ are relatively prime. We have
\begin{equation*}
    \mc{R}/\Pi_{i=1}^n\mfk{I}_i \cong \mc{R}/\mfk{I}_1\times\ldots\times\mc{R}/\mfk{I}_n.
\end{equation*}
where $\times$ stands for Cartesian product and the operations of the right hand side are defined componentwise.
\end{lemma}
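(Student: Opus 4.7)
The plan is to prove the isomorphism by constructing an explicit surjective ring homomorphism from $\mc{R}$ to the product ring and identifying its kernel as $\prod_{i=1}^n\mfk{I}_i$, after which the first isomorphism theorem delivers the result. Concretely, define
\begin{equation*}
    \phi\from \mc{R}\to\mc{R}/\mfk{I}_1\times\cdots\times\mc{R}/\mfk{I}_n,\quad \phi(r)\defeq (r+\mfk{I}_1,\ldots,r+\mfk{I}_n),
\end{equation*}
with coordinatewise operations on the target. Checking that $\phi$ is a ring homomorphism is immediate from the definitions of the quotient ring operations recalled just above the statement.

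Next I would identify $\ker(\phi)=\bigcap_{i=1}^n\mfk{I}_i$, which is obvious since $\phi(r)=0$ exactly when $r\in\mfk{I}_i$ for every $i$. The first substantive step is to argue $\bigcap_{i=1}^n\mfk{I}_i=\prod_{i=1}^n\mfk{I}_i$. This I would do by induction on $n$. The base case $n=2$ is the observation, stated in the excerpt, that relatively prime ideals satisfy $\mfk{I}_1\mfk{I}_2=\mfk{I}_1\cap\mfk{I}_2$. For the induction step it suffices to show that $\mfk{I}_1$ is relatively prime to $\mfk{J}\defeq \prod_{i=2}^n\mfk{I}_i$: pairwise coprimality gives elements $a_i\in\mfk{I}_1$, $b_i\in\mfk{I}_i$ with $a_i+b_i=1$ for $i\geq 2$, so $1=\prod_{i=2}^n(a_i+b_i)$ expands into a sum in which every term other than $\prod_{i=2}^n b_i\in\mfk{J}$ contains a factor from $\mfk{I}_1$, whence $1\in\mfk{I}_1+\mfk{J}$. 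Combining with the inductive hypothesis applied to $\mfk{I}_2,\ldots,\mfk{I}_n$ then gives $\bigcap_i\mfk{I}_i=\mfk{I}_1\cap\mfk{J}=\mfk{I}_1\mfk{J}=\prod_i\mfk{I}_i$.

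The main obstacle, and the real content of the theorem, is surjectivity of $\phi$. The strategy is to construct, for each index $i$, an element $e_i\in\mc{R}$ with $e_i\equiv 1\modl[i]$ and $e_i\equiv 0\modl[j]$ for $j\neq i$; then $\phi\bigl(\sum_i r_i e_i\bigr)$ realises an arbitrary target tuple $(r_1+\mfk{I}_1,\ldots,r_n+\mfk{I}_n)$. To produce $e_i$, apply the preceding coprimality argument with the roles rearranged: $\mfk{I}_i$ is relatively prime to $\prod_{j\neq i}\mfk{I}_j$, so there exist $u\in\mfk{I}_i$ and $v\in\prod_{j\neq i}\mfk{I}_j$ with $u+v=1$. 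Setting $e_i\defeq v$ gives $e_i\in\mfk{I}_j$ for every $j\neq i$ and $e_i=1-u\equiv 1\modl[i]$, as required.

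Finally, the first isomorphism theorem applied to $\phi$ yields
\begin{equation*}
    \mc{R}\big/\textstyle\prod_{i=1}^n\mfk{I}_i \;=\; \mc{R}/\ker(\phi) \;\cong\; \mc{R}/\mfk{I}_1\times\cdots\times\mc{R}/\mfk{I}_n,
\end{equation*}
which is the stated isomorphism. The only nontrivial ingredient beyond bookkeeping is the coprimality-lifting step used in both the kernel computation and the construction of the idempotent-like elements $e_i$; everything else is routine verification from the definitions collected earlier in the section.
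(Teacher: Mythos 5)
The paper does not prove this lemma: it appears in the background section as a cited classical result (Hungerford, Corollary 2.27), with the explicit remark that its proof can be found in a standard textbook, so there is no in-paper argument to compare against. Your proof is the standard one and is correct: the diagonal homomorphism $\phi\from\mc{R}\to\prod_{i}\mc{R}/\mfk{I}_i$ has kernel $\bigcap_i\mfk{I}_i$; an induction using the coprimality-lifting step (that $\mfk{I}_i$ is coprime to $\prod_{j\neq i}\mfk{I}_j$) identifies this kernel with $\prod_i\mfk{I}_i$; the same coprimality yields elements $e_i$ with $e_i\equiv 1\pmod{\mfk{I}_i}$ and $e_i\equiv 0\pmod{\mfk{I}_j}$ for $j\neq i$, giving surjectivity via $\sum_i r_ie_i$; and the first isomorphism theorem concludes. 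One point to make explicit in a polished write-up: the hypothesis ``relatively prime'' must be read as \emph{pairwise} relatively prime, the reading your induction and construction of the $e_i$ both rely on, since merely requiring $\sum_i\mfk{I}_i=\mc{R}$ is too weak (e.g.\ $2\Z,4\Z,3\Z$ in $\Z$).
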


We provide a quick example for what have been reviewed above.
\begin{example}
    Consider $\mbb{Z}$ the set of all integers with ordinary addition $+$ and multiplication $\cdot$. Clearly, it forms a commutative ring. $2\mbb{Z}$ is the principal ideal of $\mbb{Z}$ consisting of all the even integers. Moreover, it is a prime ideal. The quotient $\mbb{Z}/2\mbb{Z} = \mbb{Z}_2$ forms a ring with addition $+\mod 2\mbb{Z}$ and multiplication $\cdot\mod 2\mbb{Z}$. Also, for $3\mbb{Z}$ another principal prime ideal of $\mbb{Z}$, we have the quotient ring $\mbb{Z}/3\mbb{Z} = \mbb{Z}_3$. Since $2\cdot (-1) + 3\cdot 1=1$, $2\mbb{Z}+3\mbb{Z} = \mbb{Z}$ and thus $2\mbb{Z}$ and $3\mbb{Z}$ are relatively prime. One can easily verify that $2\mbb{Z}\cap 3\mbb{Z}$ is precisely $6\mbb{Z}$. Now, the CRT guarantees the existence of a ring isomorphism between $\mbb{Z}_6=\mbb{Z}/6\mbb{Z}$ and $\mbb{Z}_2 \times \mbb{Z}_3$. One can verify that $\mc{M}(v_1,v_2) = 3v_1 - 2v_2\mod 6\mbb{Z}$ where $v_1\in\mbb{Z}_2$ and $v_2\in\mbb{Z}_3$ is a ring isomorphism.
\end{example}

\subsection{Algebraic Numbers and Algebraic Integers}
An {\em algebraic number} is a complex number that is a root of some polynomial with coefficients in $\mbb{Z}$. Let $\LL$ be a field and $\KK\subset\LL$ be a subfield; $\LL$ is said to be a field extension of $\KK$, which is usually denoted as $\LL/\KK$. $\LL$ can be viewed as a vector space over $\KK$. The {\em degree} of $\LL$ over $\KK$, denoted by $[\LL:\KK]$, is defined as the dimension of the vector space $\LL$ over $\KK$. A {\em number field} is a field extension of $\Q$ with finite degree, i.e., a finite extension $\KK/\Q$. Every number field $\KK$ can be generated from $\Q$ by adjoining an algebraic number $\theta$, i.e., $\KK=\Q(\theta)$.
An {\em algebraic integer} is a complex number that is a root of some polynomial with the leading coefficient 1 and other coefficients in $\mbb{Z}$. For a number field $\KK$, we denote by $\Ok$ the {\em ring of integers} of $\KK$ which comprises all the algebraic integers in $\KK$.

Let $\LL/\KK$ be a {\em field extension} of $\KK$ with degree $[\LL:\KK]=n$. Throughout the paper, we will further assume that $\LL/\KK$ is a {\em Galois extension}. There are exactly $n$ distinct $\KK$-automorphisms $\sigma_i:\mbb{L}\rightarrow\mbb{L}$ for $i\in\{1,\ldots,n\}$, i.e., automorphisms that fix $\KK$. Such automorphisms are called (relative) {\em embeddings}. It can be shown that $\text{Gal}(\LL/\KK)\defeq\{\sigma_1,\ldots,\sigma_n\}$ form a group under function composition, which is called the {\em Galois group}. For $\alpha\in\LL$, we define the norm of $\alpha$ as
\begin{equation*}
    N_{\LL/\KK}(\alpha)=\prod_{i=1}^n\sigma_i(\alpha),
\end{equation*}
where $\sigma_2(\alpha),\ldots,\sigma_n(\alpha)$ are called the conjugates of $\sigma_1(\alpha)=\alpha$. Let $\{\alpha_1,\ldots,\alpha_n\}$ be an {\em integral basis} for $\OL$, such that any element in $\OL$ can be uniquely written as a linear combination of the basis element with coefficients $\Z$. The {\em discriminant} of a number field $\LL$ is defined as
\begin{equation*}
    d_{\LL}\defeq \det\left(
                                                 \begin{array}{cccc}
                                                   \sigma_1(\alpha_1) & \sigma_1(\alpha_2) & \ldots & \sigma_1(\alpha_n) \\
                                                   \sigma_2(\alpha_1) & \sigma_2(\alpha_2) & \ldots & \sigma_2(\alpha_n) \\
                                                   \vdots & \vdots & \ddots & \vdots \\
                                                   \sigma_n(\alpha_1) & \sigma_n(\alpha_2) & \ldots & \sigma_n(\alpha_n) \\
                                                 \end{array}
                                               \right)^2.
\end{equation*}

Let $\mfk{I}$ be an ideal in $\OL$, then $\mfk{I}$ can be generated by at most two elements, i.e., $\mfk{I}=\alpha\OL + \beta\OL$ for some $\alpha,\beta\in\OL$.
The norm of $\mfk{I}$ is defined as
\begin{equation*}
    N(\mfk{I}) \defeq |\OL/\mfk{I}|.
\end{equation*}
Moreover, if $\mfk{I}=\alpha\OL$ is principal, $N(\mfk{I}) = |N_{\LL/\Q}(\alpha)|$.

Let $\mfk{p}$ be a prime ideal in $\OL$, the ring of integers of $\LL$ with $[\LL:\Q]=n$. We say that $\mfk{p}$ lies above a prime $p$ if $\mfk{p}\cap\mbb{Z}=p\mbb{Z}$. For a prime $p$, the principal ideal $p\OL$ can be factorized into $1\leq g\leq n$ prime ideals as
\begin{equation*}
    p\OL = \mfk{p}_1^{e_1}\cdot\ldots\cdot \mfk{p}_g^{e_g},
\end{equation*}
where $e_{i}, i\in\{1,\ldots,g\}$, is the {\em ramification index} of $\mfk{p}_i$. Also, for each $\mfk{p}_i$, we have $N(\mfk{p}_i)=p^{f_i}$ and $\OL/\mfk{p}_i\cong \mbb{F}_{p^{f_i}}$ where $1\leq f_i\leq n$ is the {\em inertial degree}. Overall, it can be shown that $\sum_{i=1}^g e_i f_i=n$. For a Galois extension, we have $e_1=e_2=\ldots=e_g=e$ and $f_1=f_2=\ldots=f_g=f$, which implies that $efg=n$. A prime $p$ is ramified in $\OL$ if not all $e_i=1$ in the factorization of $p\OL$. Ramified primes in $\OL$ are precisely those $p$ that divides the discriminant $d_{\LL}$.

\begin{example}
    Consider $\mbb{Q}(\msf{i})$ the field extension obtained from $\Q$ by adjoining $\msf{i}$. Every element in $\Q(\msf{i})$ has the form $a+b\msf{i}$ where $a,b\in\Q$; thus, it is a number field with degree 2. The two $\Q$-automorphisms are $\sigma_1(a+b\msf{i})\rightarrow a+b\msf{i}$ and $\sigma_2(a+b\msf{i})\rightarrow a-b\msf{i}$. The Galois group is cyclic and can be generated by $\sigma_2$. Since $\sigma_1$ is the identity mapping and $\sigma_2$ sends an element to its complex conjugate, the norm defined in this number field coincides with the Euclidean norm. The ring of integers is $\Zi$, the Gaussian integers, having integral basis $\{1,\msf{i}\}$. The discriminant is computed as follows,
\begin{equation*}
    d_{\Q(\msf{i})} = \det \begin{pmatrix}
                             1 & \msf{i} \\
                             1 & -\msf{i} \\
                           \end{pmatrix}^2 = -4.
\end{equation*}
Since $2|d_{\Q(\msf{i})}$, $2\Zi = \mfk{p}^2$ ramifies where $\mfk{p}=(1+\msf{i})\Zi$. This is the only ramified prime in $\mbb{Q}(\msf{i})$. Also, $5\Zi=\mfk{p}_1\mfk{p_2}$ splits into two prime ideals $\mfk{p}_1=(1+2\msf{i})\Zi$ and $\mfk{p}_2=(1-2\msf{i})\Zi$ with $e=1$ and $f=1$. Another example is that $3\Zi$ is itself a prime ideal with $e=1$ and $f=2$. In each case, we have $efg = 2$.
\end{example}

\subsection{Cyclic division algebra and lattice space-time codes}\label{sec:CDA_review}
An {\em algebra} $\mc{A}$ over a field $\LL$ is a set satisfying: {\em i)} it is a vector space over $\LL$; {\em ii)} it is a ring with respect to addition and multiplication by elements of $\mc{A}$; and {\em iii)} $(\alpha a) b =  a (\alpha b) = \alpha (a b)$ for any $\alpha\in\LL$ and $a,b\in\mc{A}$. Let $\LL/\KK$ be a field extension of $\KK$ of degree $n$ whose Galois group is a cyclic group generated by $\sigma$. One can construct a {\em cyclic algebra} $\mc{A}=(\LL/\KK,\sigma,\gamma)$ as
\begin{equation*}
    \mc{A}=(\LL/\KK,\sigma,\gamma)= \left\{x_0 + x_1\msf{e} +\ldots + x_{n-1}\msf{e}^{n-1}|x_0,\ldots,x_{n-1}\in\LL \right\},
\end{equation*}
where $\msf{e}^n=\gamma\in\KK$ and $\lambda\msf{e}=\msf{e}\sigma(\lambda)$ for $\lambda\in\LL$. $\mc{A}$ is said to be a division algebra if every non-zero element of $\mc{A}$ is invertible. A {\em cyclic division algebra} is a cyclic algebra that is at the same time a division algebra. In the space-time coding literature (see \cite{oggier07} and reference therein), a cyclic division algebra is usually constructed from a cyclic algebra $\mc{A}=(\LL/\KK,\sigma,\gamma)$ with carefully chosen $\gamma$ such that none of $\gamma,\gamma^2,\ldots,\gamma^{n-1}$ are norms of some element of $\LL$.


Consider $n_t=n_r=T=n$, an $n\times n$ STBC carved from $\mc{A}$ corresponds to a finite subset of
\begin{align}\label{eqn:SP_code_Abar}
    \bar{\mc{A}}_{\mfk{I}} = \left\{  x_0 + x_1\msf{e} +\ldots + x_{n-1}\msf{e}^{n-1} | x_0,x_1, \ldots, x_{n-1}\in\mfk{I} \right\},
\end{align}
where $\mfk{I}$ is an ideal in $\OL$. More specifically, an $n\times n$ STBC thus constructed can be obtained by putting $\bar{\mc{A}}_{\mfk{I}}$ into the matrix form given by
\begin{equation}\label{eqn:SP_code_matrix}
    \mc{C}_{\mfk{I}} = \left\{ \left.
                            \begin{pmatrix}
                              x_0                     & x_1                     & \ldots & x_{n-1} \\
                              \gamma\sigma(x_{n-1})   & \sigma(x_0)             &        & \sigma(x_{n-2}) \\
                              \vdots                  &                         & \ddots & \vdots \\
                              \gamma\sigma^{n-1}(x_1) & \gamma\sigma^{n-1}(x_2) & \ldots & \sigma^{n-1}(x_0) \\
                            \end{pmatrix}
      \right| x_0, \ldots, x_{n-1}\in\mfk{I} \right\}.
\end{equation}
A layer $\ell\in\{0,\ldots,n-1\}$ of the codeword in $\mc{C}_{\mfk{I}}$ is the collection of the entries in positions $(m,(\ell+m)\mod (n))$ for $m\in\{1,\ldots,n\}$. We note that each layer $\ell\in\{0,\ldots,n-1\}$ corresponds to the same $x_{\ell}\in \mfk{I}$. Here, we use the subscript $\mfk{I}$ in $\bar{\mc{A}}_{\mfk{I}}$ and $\mc{C}_{\mfk{I}}$ to emphasize that the elements $x_{\ell}$ for all $\ell$ are restricted to the ideal $\mfk{I}$. For transmission with finite input power constraint, one carves a subset from (a possibly shifted and scaled version of) $\mc{C}_{\mfk{I}}$ to form the codebook. From this point onward, we restrict the discussion to $\KK=\Q(\msf{i})$ or $\Q(\omega)$, which corresponds to the case where each $x_{\ell}$ is a linear combination of $n$ QAM or HEX constellation symbols. One observes that each codeword $\mbf{X}\in\mc{C}_{\mfk{I}}$ conveys $n$ symbols of $\LL$, where each symbol $x_{\ell}$ is a linear combination of $n$ QAM or HEX symbols. Therefore, the STBC thus constructed is \textit{full-rate}. i.e., it uses an $n\times n$ matrix to transmit $n^2$ symbols. Another consequence of having each $x_{\ell}$ being a linear combination of $n$ QAM or HEX symbols is that the code may not be energy-efficient as compared to sending QAM or HEX symbols directly. This drawback can often be overcome by choosing a suitable ideal $\mfk{I}$ such that $\mc{C}_{\mfk{I}}$ becomes a scaled and rotated version of $\Zi^n$ or $\Zw^n$.

The determinant of the codeword $\mbf{X}\in\mc{C}_{\mfk{I}}$ corresponding to $x\in\mc{A}$ is called the {\em reduced norm} of $x$. What is important about having the structure of cyclic division algebra is that when $\gamma\in\Ok$ not the norm of an element in $\LL$, it guarantees that the code is \textit{fully diverse} and has \textit{non-vanishing determinant} (NVD). This is evident from \cite[Corollary1 and Corollary 2]{Oggier06perfect}, which states that the reduced norm of $x\in \bar{\mc{A}}_{\OL}$ belongs to $\Ok$ and thus $\delta(\mc{C}_{\OL})=1$. Now, since $\mfk{I}\subseteq \OL$, one has that $\delta(\mc{C}_{\mfk{I}})\geq 1$. In fact, one can obtain better bounds on $\delta(\mc{C}_{\mfk{I}})$ as follows.
\begin{lemma}[{\cite[Corollary 3 and Corollary 4]{Oggier06perfect}}]\label{lma:bounds_delta_I}
    Let $\mc{C}_{\mfk{I}}$ be a STBC built over the cyclic division algebra $\mc{A}=(\LL/\KK,\sigma,\gamma)$ as in \eqref{eqn:SP_code_matrix}, where $\gamma\in\Ok$ not the norm of an element in $\LL$. Then,
\begin{equation*}
    N(\mfk{I}) \leq \delta(\mc{C}_{\mfk{I}}) \leq \min_{x\in\mfk{I}} N_{\LL/\Q}(x).
\end{equation*}
\end{lemma}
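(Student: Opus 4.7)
The plan is to prove the two inequalities separately, using the diagonal-codeword substructure for the upper bound and the relative norm ideal $N_{\LL/\KK}(\mfk{I})\subseteq\Ok$ for the lower bound. Throughout, I would rely on the lattice identity \eqref{eqn:min_det_C_lattice}, which expresses $\delta(\mc{C}_{\mfk{I}})=\min_{\mbf{X}\neq\mbf{0}}\lvert\det(\mbf{X})\rvert^2$, together with the standing assumption $\KK\in\{\Q(\msf{i}),\Q(\omega)\}$ being an imaginary quadratic field, so $N_{\KK/\Q}(y)=\lvert y\rvert^2$ for $y\in\KK$ and the transitivity $N_{\LL/\Q}=N_{\KK/\Q}\circ N_{\LL/\KK}$.

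For the upper bound, given any nonzero $x\in\mfk{I}$, I would substitute $x_0=x$ and $x_1=\cdots=x_{n-1}=0$ into \eqref{eqn:SP_code_matrix}. The resulting codeword is diagonal with diagonal entries $\sigma^0(x),\sigma^1(x),\ldots,\sigma^{n-1}(x)$, so its determinant equals the relative norm $N_{\LL/\KK}(x)\in\KK$. Then $\lvert N_{\LL/\KK}(x)\rvert^2=N_{\KK/\Q}(N_{\LL/\KK}(x))=N_{\LL/\Q}(x)$, which via \eqref{eqn:min_det_C_lattice} gives $\delta(\mc{C}_{\mfk{I}})\leq N_{\LL/\Q}(x)$; minimizing over $x\in\mfk{I}\setminus\{0\}$ yields the claim. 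This direction is essentially a one-line computation once the diagonal codeword is chosen.

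For the lower bound, let $\mbf{X}\in\mc{C}_{\mfk{I}}\setminus\{\mbf{0}\}$ be arbitrary and aim to show $\lvert\det(\mbf{X})\rvert^2\geq N(\mfk{I})$. The key observation is that every entry of row $i$ in \eqref{eqn:SP_code_matrix} has the form $\sigma^i(x_j)$ or $\gamma\,\sigma^i(x_j)$ with $x_j\in\mfk{I}$, and therefore lies in the ideal $\sigma^i(\mfk{I})\subseteq\OL$, since $\gamma\in\Ok\subseteq\OL$ and $\sigma^i(\mfk{I})$ is closed under multiplication by elements of $\OL$. Expanding $\det(\mbf{X})$ as a signed sum over permutations, each summand is a product of one entry from each of the $n$ rows and hence lies in the ideal product $\prod_{i=0}^{n-1}\sigma^i(\mfk{I})$. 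This product is Galois-stable and equals the relative norm ideal $N_{\LL/\KK}(\mfk{I})$, an ideal of $\Ok$. Because the reduced norm of any element of $\bar{\mc{A}}_{\OL}$ already lies in $\Ok$ by \cite[Corollary 1]{Oggier06perfect}, I conclude $\det(\mbf{X})\in N_{\LL/\KK}(\mfk{I})$ as an element of $\Ok$. Finally, for any nonzero $y\in\Ok$ contained in an integral ideal $\mfk{a}$, $\mfk{a}\mid y\Ok$ and hence $N_{\KK/\Q}(\mfk{a})\mid\lvert N_{\KK/\Q}(y)\rvert$; applying this with $y=\det(\mbf{X})$, $\mfk{a}=N_{\LL/\KK}(\mfk{I})$, and using $N_{\KK/\Q}(N_{\LL/\KK}(\mfk{I}))=N_{\LL/\Q}(\mfk{I})=N(\mfk{I})$, gives $\lvert\det(\mbf{X})\rvert^2\geq N(\mfk{I})$ as desired.

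The step I expect to be the main obstacle is the bookkeeping that establishes $\det(\mbf{X})\in N_{\LL/\KK}(\mfk{I})$: one must carefully track the $\sigma^i$ and $\gamma$ factors across the full Leibniz expansion and invoke Galois invariance of $\prod_i\sigma^i(\mfk{I})$ to descend the containment from $\OL$ down to $\Ok$. Everything else is a routine application of the multiplicativity and transitivity of ideal norms together with the lattice identity \eqref{eqn:min_det_C_lattice}.
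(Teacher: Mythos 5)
The paper does not actually prove this lemma; it is stated with a citation to \cite[Corollaries 3 and 4]{Oggier06perfect}, so there is no in-text proof to compare against. Your reconstruction is correct and matches the strategy one would expect from the cited reference: diagonal codewords with $x_1=\cdots=x_{n-1}=0$ for the upper bound, and the relative-norm-ideal containment of the reduced norm for the lower bound. The upper-bound computation is clean. In the lower bound there is a small imprecision worth flagging: the Leibniz expansion places $\det(\mathbf{X})$ in the ideal $\prod_{i=0}^{n-1}\sigma^i(\mfk{I})$ of $\OL$, which is not literally the relative norm ideal $N_{\LL/\KK}(\mfk{I})\subseteq\Ok$ but rather its extension $N_{\LL/\KK}(\mfk{I})\,\OL$. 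To land in $\Ok$ you combine this with \cite[Corollary 1]{Oggier06perfect} (so $\det(\mathbf{X})\in\Ok$) and the extension-contraction identity $N_{\LL/\KK}(\mfk{I})\OL\cap\Ok=N_{\LL/\KK}(\mfk{I})$, which holds because $\OL$ is the integral closure of the Dedekind domain $\Ok$ in a finite extension; this is what your phrase about ``Galois invariance to descend the containment'' is implicitly appealing to, and it does go through. Once $\det(\mathbf{X})\in N_{\LL/\KK}(\mfk{I})$ is established, the divisibility $N(\mfk{I})=N_{\KK/\Q}\bigl(N_{\LL/\KK}(\mfk{I})\bigr)\mid\lvert\det(\mathbf{X})\rvert^2$ together with full diversity (which guarantees $\det(\mathbf{X})\neq 0$ for $\mathbf{X}\neq\mathbf{0}$) gives the bound. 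Both minima are of course taken over nonzero elements, which you handle correctly.
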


We end this section by providing the definition of a {\it perfect} STBC as follows.
\begin{define}
    A $n\times n$ STBC is called a perfect STBC if {\em i)} it is full-rate; {\em ii)} it is fully diverse and has NVD property; {\em iii)} the energy used to send the coded symbol on each layer is equal to that for sending the uncoded symbol themselves; and {\em iv)} all the coded symbols have the same average energy.
\end{define}

\section{Proposed Layered Space-Time Index Coding}\label{sec:LSTIC}
In this section, we propose the LSTIC and show that for any index set, it can provide SNR gain that is proportional to the information contained in the side information. In the proposed scheme, instead of directly tackling $\bar{\mc{A}}_{\OL}$ as done in \cite{huang17GCIC}, we recognize the layered structure of STBC reviewed in Section~\ref{sec:CDA_review} and perform partition layer by layer. More specifically, we split each message $w_k$, $k\in\{1,\ldots,K\}$, into $n$ sub-messages, namely $w_{k,\ell}$ for $\ell\in\{0,\ldots,n-1\}$, and encode $w_{1,\ell},\ldots,w_{K,\ell}$ into $x_{\ell}$ the layer $\ell$. The main advantage of this approach is that now each layer's signal is in $\OL$ and thereby one can apply CRT for partitioning. In what follows, we focus solely on cyclic division algebras with $\gamma\in\Ok$, such that none of $\gamma,\gamma^2,\ldots,\gamma^{n-1}$ are norms of element in $\LL$. We split the discussion into two parts depending on whether $\mfk{I}$ is principal or not. The first case includes constructions from $2 \times 2$, $3\times 3$, and $4\times 4$ perfect STBC while the second case encompasses constructions from the $6\times 6$ perfect STBC. The similar approach can also be applied to Alamouti code for constructing Layered Alamouti-coded index coding, which will be discussed in Section~\ref{sec:AlaIC}.

\begin{remark}
We emphasize that the approach that we propose in the following in fact applies to any cyclic division algebra with the non-norm element $\gamma$ with $\KK=\Q(\msf{i})$ or $\Q(\omega)$. For instance, the STBC design with non-norm element $\gamma \in\KK$ in \cite{elia07} can also be used as the base STBC of our LSTIC. The main reason that we particularly focus on $\gamma\in\Ok$ is so that we can rely on Lemma~\ref{lma:bounds_delta_I} to prove a lower bound on the side information gain. Apart from this, the proposed method does not require $\gamma\in\Ok$.
\end{remark}

\subsection{LSTIC with principal $\mfk{I}$}\label{subsec:LSTIC_principal}
Without loss of generality, we assume that $\mfk{I}$ is generated by some $\alpha\in\OL$, i.e., $\mfk{I}=\alpha\OL$. Then, \eqref{eqn:SP_code_Abar} becomes
\begin{align*}
    \bar{\mc{A}}_{\mfk{I}} &= \left\{ x_0 + x_1\msf{e} +\ldots + x_{n-1}\msf{e}^{n-1} | x_0,x_1, \ldots, x_{n-1}\in \alpha\OL \right\}, \nn \\
                &=\left\{ \alpha x_0 + \alpha x_1\msf{e} +\ldots + \alpha x_{n-1}\msf{e}^{n-1} | x_0,x_1, \ldots, x_{n-1}\in \OL \right\},
\end{align*}
and \eqref{eqn:SP_code_matrix} can be rewritten as
\begin{equation}\label{eqn:SP_code_matrix_principal}
      \left\{ \left.
        D(\alpha)\cdot
        \begin{pmatrix}
                              x_0                     & x_1                     & \ldots & x_{n-1} \\
                              \gamma\sigma(x_{n-1})   & \sigma(x_0)             &        & \sigma(x_{n-2}) \\
                              \vdots                  &                         & \ddots & \vdots \\
                              \gamma\sigma^{n-1}(x_1) & \gamma\sigma^{n-1}(x_2) & \ldots & \sigma^{n-1}(x_0) \\
                            \end{pmatrix}
      \right| x_0,\ldots, x_{n-1}\in\OL \right\},
\end{equation}
where
\begin{equation*}
  D(\alpha) \defeq \begin{pmatrix}
          \alpha & 0  & \ldots & 0 \\
          0 & \sigma(\alpha) &  \ldots & 0 \\
          \vdots  & \vdots &  \ddots & \vdots \\
          0 & 0 & \ldots  & \sigma^{n-1}(\alpha) \\
        \end{pmatrix}
\end{equation*}
We emphasize here that, as mentioned in Section~\ref{sec:CDA_review}, the codebook that we actually use should be a scaled version of the above codebook to satisfy the power constraint. However, in our analysis, what we really care is the {\it ratio} between the minimum determinants of the codebooks with and without side information, where the scaling does not make any difference. Therefore, throughout the paper, when analyzing the proposed scheme, we ignore the scaling factor for the sake of brevity. On the other hand, in our simulations, we do take the scaling into account and normalize the codebook to make the parameters reflect the actual $\SNR$.

We can now use the technique in \cite{huang15_lic} to partition $\OL$. Let $\mfk{q}_1,\ldots,\mfk{q}_K$ be $K$ ideals in $\OL$ that are relatively prime and have $N(\mfk{q}_k)=q_k$, $k\in\{1,\ldots,K\}$. Note that $\mfk{q}_k$s are not necessarily prime ideals and $q_k$s are not necessarily prime. We have $\mfk{q}_1\cap\ldots\cap\mfk{q}_K =\mfk{q_1}\cdot \ldots\cdot \mfk{q}_K\defeq \mfk{q}$. From CRT, we have
\begin{equation*}
    \OL/\mfk{q} \cong \OL/\mfk{q}_1\times \ldots \times \OL/\mfk{q}_K \cong \mbb{B}_{q_1}\times\ldots\times\mbb{B}_{q_K},
\end{equation*}
where $\mbb{B}_{q_k}=\OL/\mfk{q}_k$ is a commutative ring\footnote{Depending on the ideal $\mfk{q}_k$, this ring could be a finite field, a product of finite fields, a product of finite rings and finite fields, or others. But it is always commutative since a quotient ring of a commutative ring is always commutative. Throughout the paper, we do not use the ring property of the messages and therefore, we do not emphasize which type of ring it is.} with size $q_k$. Let $\mc{M}$ be an isomorphism that maps $\mbb{B}_{q_1}\times\ldots\times\mbb{B}_{q_K}$ to a complete set of coset leaders of $\OL/\mfk{q}$ having minimum energy.

Now, for $k\in\{1,\ldots,K\}$, let $w_k\in\mbb{B}_{q_k}^n$ which can be represented as $w_k=(w_{k,0},\ldots,w_{k,n-1})$ where each $w_{k,\ell}\in\mbb{B}_{q_k}$. The encoder collects $w_{1,\ell},\ldots,w_{K,\ell}$ to form the signal of the layer $\ell\in\{0,\ldots,n-1\}$ as
\begin{equation*}
    x_{\ell} = \mc{M}(w_{1,\ell},\ldots,w_{K,\ell})\in \OL/\mfk{q},\quad \ell\in\{0,\ldots,n-1\}.
\end{equation*}
The overall codebook corresponds to
\begin{equation*}
    \bar{\mc{A}}=\left\{ \alpha x_0 + \alpha x_1\msf{e} +\ldots + \alpha x_{n-1}\msf{e}^{n-1} | x_0, \ldots, x_{n-1}\in \OL/\mfk{q} \right\},
\end{equation*}
a subset of $\bar{\mc{A}}_{\mfk{I}}$ and has the matrix form as that in \eqref{eqn:SP_code_matrix_principal} with $x_0,\ldots,x_{n-1}\in\OL/\mfk{q}$.

For the proposed LSTIC within this class, we can show the following theorem.
\begin{theorem}\label{thm:SI_gain_principal}
    For any $\mc{S}\subset\{1,\ldots,K\}$, the proposed LSTIC with principal $\mfk{I}$ provides a side information gain at least $6$ dB/bits per real symbol, i.e., $\Gamma(\mc{C},\mc{S})\geq 6$ dB/bits per real symbol. Moreover, if all $\mfk{q}_k$, $k\in\{1,\ldots,K\}$, are principal, then $\Gamma(\mc{C},\mc{S}) = 6$ dB/bits per real symbol.
\end{theorem}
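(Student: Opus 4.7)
The strategy is to relate $\delta(\mc{C})$ and $\delta(\mc{C}_\mc{S})$ to norms of suitable ideals in $\OL$ via Lemma~\ref{lma:bounds_delta_I}, and then substitute into the definition \eqref{eqn:side_info_gain}. First I would compute the rate denominator. Each message $w_k$ is split across $n$ layers into sub-messages in $\mbb{B}_{q_k}$, so $M_k=q_k^n$, while the full-rate assumption gives $r=2n^2$, hence $R_k=\log_2(q_k)/(2n)$ and $n_tR_{\mc{S}}=\tfrac12\log_2\prod_{s\in\mc{S}}q_s$. With the standard convention $6\text{ dB}\equiv 20\log_{10}2$, the claim $\Gamma(\mc{C},\mc{S})\ge 6$ dB/bits per real symbol is then equivalent to the pure ratio inequality $\delta(\mc{C}_{\mc{S}})/\delta(\mc{C})\ge\prod_{s\in\mc{S}}q_s$.

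For the lower bound on $\delta(\mc{C}_\mc{S})$, I would trace through the CRT isomorphism $\mc{M}$: fixing $w_{s,\ell}$ for every $s\in\mc{S}$ and every $\ell$ pins the $\mbb{B}_{q_s}$-component of $\mc{M}^{-1}(x_\ell)$ for each $s\in\mc{S}$, which via $\mbb{B}_{q_s}\cong\OL/\mfk{q}_s$ forces every layer-wise difference to satisfy $x_\ell-x'_\ell\in\mfk{q}_s$ for all $s\in\mc{S}$, hence $x_\ell-x'_\ell\in\bigcap_{s\in\mc{S}}\mfk{q}_s=\prod_{s\in\mc{S}}\mfk{q}_s\defeq\mfk{q}_{\mc{S}}$ (by pairwise coprimality). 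Multiplying by $\alpha$ puts the codeword difference in $\bar{\mc{A}}_{\mfk{I}\mfk{q}_\mc{S}}$, so Lemma~\ref{lma:bounds_delta_I} together with multiplicativity of the ideal norm yields
\begin{equation*}
\delta(\mc{C}_\mc{S})\ge N(\mfk{I}\mfk{q}_\mc{S})=N(\mfk{I})\prod_{s\in\mc{S}}q_s.
\end{equation*}
Specialising to $\mc{S}=\emptyset$ gives the companion bound $\delta(\mc{C})\ge N(\mfk{I})$.

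To match this with an upper bound $\delta(\mc{C})\le N(\mfk{I})$, I would exhibit an explicit pair of codewords: the pair differing only in layer $0$ with $x_0-x'_0=1$ produces codeword difference exactly $D(\alpha)$, whose squared determinant is
\begin{equation*}
\bigl|\det D(\alpha)\bigr|^2=\bigl|N_{\LL/\KK}(\alpha)\bigr|^2=\bigl|N_{\LL/\Q}(\alpha)\bigr|=N(\mfk{I}),
\end{equation*}
where I have used $N_{\LL/\Q}=N_{\KK/\Q}\circ N_{\LL/\KK}$ and $N_{\KK/\Q}(z)=|z|^2$ for $\KK=\Q(\msf{i})$ or $\Q(\omega)$. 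Provided $\mc{M}$ chooses $0$ and $1$ as the minimum-energy coset leaders of their respective cosets of $\OL/\mfk{q}$, this pair lies in $\mc{C}$, so $\delta(\mc{C})=N(\mfk{I})$ and therefore
\begin{equation*}
\Gamma(\mc{C},\mc{S})=\frac{10\log_{10}\bigl(\delta(\mc{C}_\mc{S})/\delta(\mc{C})\bigr)}{n_tR_{\mc{S}}}\ge\frac{10\log_{10}\prod_{s\in\mc{S}}q_s}{\tfrac12\log_2\prod_{s\in\mc{S}}q_s}=20\log_{10}2\approx 6.
\end{equation*}

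For the equality claim, if each $\mfk{q}_k=\pi_k\OL$ is principal then both $\mfk{q}_\mc{S}=\bigl(\prod_{s\in\mc{S}}\pi_s\bigr)\OL$ and $\mfk{I}\mfk{q}_\mc{S}=\alpha\bigl(\prod_{s\in\mc{S}}\pi_s\bigr)\OL$ are principal, so the two ends of Lemma~\ref{lma:bounds_delta_I} collapse to the common value $N(\mfk{I}\mfk{q}_\mc{S})$ (the minimum of $|N_{\LL/\Q}|$ on a nonzero principal ideal is attained at its generator). The same ``single-layer unit-difference'' construction—now with target difference $\prod_{s\in\mc{S}}\pi_s$ in layer $0$, which indeed lies in $\mfk{q}_\mc{S}$ and so represents a legitimate difference inside $\mc{C}_\mc{S}$—realises this minimum, giving $\delta(\mc{C}_\mc{S})=N(\mfk{I})\prod_{s\in\mc{S}}q_s$ exactly and hence $\Gamma=6$ dB/bits per real symbol. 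The main obstacle, in both the lower bound and the equality case, is the upper-bound step: it relies on the minimum-energy coset-leader set chosen by $\mc{M}$ actually containing the small elements ($0$, $1$, and the generator of $\mfk{q}_\mc{S}$ taken modulo $\mfk{q}$) that achieve the ideal-norm bound. This is natural but must be verified for each concrete LSTIC instantiation over the Golden, $3\times 3$ and $4\times 4$ perfect codes in Sections~\ref{sec:L_GIC}--\ref{sec:AlaIC}; once that is done the rest of the argument is a direct chain of equalities from the algebraic identities above.
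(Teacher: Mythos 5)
Your proof is correct and follows essentially the same route as the paper: compute $R_k=\log_2(q_k)/(2n)$, show via the CRT map $\mc{M}$ that revealing $w_{\mc{S}}$ restricts each layer difference to $\prod_{s\in\mc{S}}\mfk{q}_s$, apply the lower bound of Lemma~\ref{lma:bounds_delta_I} and multiplicativity of the ideal norm to get $\delta(\mc{C}_{\mc{S}})\ge N(\mfk{I})\prod_{s\in\mc{S}}q_s$, and invoke principality for the matching upper bound in the equality case. The only cosmetic difference is that you exhibit explicit codeword pairs to pin down $\delta(\mc{C})=N(\mfk{I})$ and worry about whether $\mc{M}$ places $0$, $1$, and the generator among the chosen coset leaders; the paper sidesteps that concern by adopting the convention of \eqref{eqn:min_det_C_lattice} that $\delta(\mc{C})$ is the minimum determinant of the underlying infinite lattice, under which your caveat dissolves.
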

\begin{proof}
    We first note that in the proposed scheme, each message is spread onto $n$ layers of signals, which are then mapped to a $n\times n$ complex codeword matrix. i.e., $2n^2$ real symbols. Therefore, the rate of the message $w_k$ is given by
    \begin{equation}\label{eqn:R_k_principal}
        R_k = \frac{1}{2n^2}\log_2(q_k^n) = \frac{1}{2n}\log_2(q_k),\quad \text{bits per real symbol}.
    \end{equation}

    Consider a generic receiver with index set $\mc{S}$, let the messages be $w_s=v_s$ for $s\in\mc{S}$. This means that $w_{s,\ell} = v_{s,\ell}$ for all $\ell\in\{0,\ldots,n-1\}$ are known at the receiver. Let us first take $\mc{S}=\{s\}$ for example. The $\ell$th layer's signal can then be rewritten as
    \begin{align*}
      x_{\ell}^{\{s\}} &= \mc{M}(w_{1,\ell},\ldots,w_{s-1,\ell},v_{s,\ell},w_{s+1,\ell},\ldots,w_{K,\ell}) \nn \\
       &\overset{(a)}{=} \mc{M}(0,\ldots,0,v_s,0,\ldots,0) + \mc{M}(w_{1,\ell},\ldots,w_{s-1,\ell}, 0 ,w_{s+1,\ell},\ldots,w_{K,\ell}) + \zeta_{\ell}^{\{s\}} \nn \\
       &= \xi_{\ell}^{\{s\}} + \tilde{x}_{\ell}^{\{s\}},
    \end{align*}
    where $\zeta_{\ell}^{\{s\}}\in\mfk{q}$, $\tilde{x}_{\ell}^{\{s\}}\defeq \mc{M}(w_{1,\ell},\ldots,w_{s-1,\ell}, 0 ,w_{s+1,\ell},\ldots,w_{K,\ell})+\zeta_{\ell}^{\{s\}}$, and $\xi_{\ell}^{\{s\}}\defeq \mc{M}(0,\ldots,0,v_s,0,\ldots,0)$ is known at the receiver. The equality (a) above holds because $\mc{M}$ is an isomorphism. From CRT, we have
    \begin{align*}
      (x_{\ell}^{\{s\}} - \xi_{\ell}^{\{s\}})\mod \mfk{q}_s &= 0,
    \end{align*}
    which implies that $x_{\ell}^{\{s\}}$ belongs to a shifted version of $\mfk{q}_s$. For the general $\mc{S}$, we can similarly show that
    \begin{align}\label{eqn:x_given_SI_principal}
      x_{\ell}^{\mc{S}} &= \mc{M}(d_{1,\ell},\ldots,d_{K,\ell}) + \mc{M}(u_{1,\ell},\ldots,u_{K,\ell}) + \zeta_{\ell}^{\mc{S}} \nn \\
      &=\xi_{\ell}^{\mc{S}} + \tilde{x}_{\ell}^{\mc{S}} ,
    \end{align}
    where $\zeta_{\ell}^{\mc{S}}\in\mfk{q}$, $\tilde{x}_{\ell}^{\mc{S}}=\mc{M}(u_{1,,\ell},\ldots,u_{K,\ell})+ \zeta_{\ell}^{\mc{S}}$, and $\xi_{\ell}^{\mc{S}} \defeq \mc{M}(d_{1,\ell},\ldots,d_{K,\ell})$ with
    \begin{equation}\label{eqn:dk_principal}
        d_{k,\ell}=\left\{
                    \begin{array}{ll}
                     v_{k,\ell} , & \hbox{$k\in\mc{S}$;} \\
                      0 , & \hbox{$k\in\mc{S}^c$,}
                    \end{array}
                  \right.
    \end{equation}
    and
    \begin{equation}\label{eqn:uk_principal}
        u_{k,\ell}= \left\{
                      \begin{array}{ll}
                        0, & \hbox{$k\in\mc{S}$;} \\
                        w_{k,\ell}, & \hbox{$k\in\mc{S}^c$.}
                      \end{array}
                    \right.
    \end{equation}
    Note that $\xi_{\ell}^{\mc{S}}$ is known at the receiver. We now have
    \begin{equation*}
      \left(x_{\ell}^{\mc{S}}- \xi_{\ell}^{\mc{S}}\right)\mod \mfk{q}_s = 0,\quad\text{for all $s\in\mc{S}$},
    \end{equation*}
    which shows that $x_{\ell}^{\mc{S}}$ belongs to a shifted version of $\cap_{s\in\mc{S}}\mfk{q}_s = \Pi_{s\in\mc{S}}\mfk{q}_s$. Therefore, after revealing $w_{\mc{S}}$, the code $\mc{C}_{\mc{S}}$ corresponds to
    \begin{align*}
        &\left\{ \alpha(\xi_0^{\mc{S}} + \ldots + \xi_{n-1}^{\mc{S}}\msf{e}^{n-1}) + \alpha ( \tilde{x}_0^{\mc{S}} +\ldots + \tilde{x}_{n-1}^{\mc{S}} \msf{e}^{n-1}) | \tilde{x}_0^{\mc{S}},\ldots, \tilde{x}_{n-1}^{\mc{S}}\in \Pi_{s\in\mc{S}}\mfk{q}_s \right\},
    \end{align*}
    Hence, thanks to that $\sigma$ is a homomorphism, each codeword $\mathbf{X}\in\mc{C}_{\mc{S}}$ has the matrix form given by
    \begin{align*}
        \mbf{X} &=  \mbf{V}^{\mc{S}} + \tilde{\mbf{X}}^{\mc{S}},
    \end{align*}
    where
    \begin{equation*}
        \mbf{V}^{\mc{S}} = D(\alpha)\cdot \begin{pmatrix}
                              \xi_0^{\mc{S}}                     & \xi_1^{\mc{S}}                     & \ldots & \xi_{n-1}^{\mc{S}} \\
                              \gamma\sigma(\xi_{n-1}^{\mc{S}})   & \sigma(\xi_0^{\mc{S}})             &        & \sigma(\xi_{n-2}^{\mc{S}}) \\
                              \vdots                  &                         & \ddots & \vdots \\
                              \gamma\sigma^{n-1}(\xi_1^{\mc{S}}) & \gamma\sigma^{n-1}(\xi_2^{\mc{S}}) & \ldots & \sigma^{n-1}(\xi_0^{\mc{S}}) \\
                            \end{pmatrix},
    \end{equation*}
    and
    \begin{equation*}
        \tilde{\mbf{X}}^{\mc{S}} =D(\alpha)\cdot \begin{pmatrix}
                              \tilde{x}_0^{\mc{S}}                     & \tilde{x}_1^{\mc{S}}                     & \ldots & \tilde{x}_{n-1}^{\mc{S}} \\
                              \gamma\sigma(\tilde{x}_{n-1}^{\mc{S}})   & \sigma(\tilde{x}_0^{\mc{S}})             &        & \sigma(\tilde{x}_{n-2}^{\mc{S}}) \\
                              \vdots                  &                         & \ddots & \vdots \\
                              \gamma\sigma^{n-1}(\tilde{x}_1^{\mc{S}}) & \gamma\sigma^{n-1}(\tilde{x}_2^{\mc{S}}) & \ldots & \sigma^{n-1}(\tilde{x}_0^{\mc{S}}) \\
                            \end{pmatrix}.
    \end{equation*}
    Note that the second part of $\tilde{\mbf{X}}^{\mc{S}}$ is a codeword of the code
    \begin{equation*}
    \mc{C}_{\Pi_{s\in\mc{S}}\mfk{q}_s} = \left\{ \left.
                            \begin{pmatrix}
                              x_0                     & x_1                     & \ldots & x_{n-1} \\
                              \gamma\sigma(x_{n-1})   & \sigma(x_0)             &        & \sigma(x_{n-2}) \\
                              \vdots                  &                         & \ddots & \vdots \\
                              \gamma\sigma^{n-1}(x_1) & \gamma\sigma^{n-1}(x_2) & \ldots & \sigma^{n-1}(x_0) \\
                            \end{pmatrix}
      \right| x_0, \ldots, x_{n-1}\in  \Pi_{s\in\mc{S}}\mfk{q}_s \right\},
    \end{equation*}
    whose minimum determinant can be bounded by Lemma~\ref{lma:bounds_delta_I} as follows,
    \begin{equation}\label{eqn:det_C_prod}
        \delta(\mc{C}_{\Pi_{s\in\mc{S}}\mfk{q}_s})\geq N({\Pi_{s\in\mc{S}}\mfk{q}_s}).
    \end{equation}
    The receiver can now subtract the known $\mbf{V}^{\mc{S}}$ and compute the minimum determinant as
    \begin{align*}
        \delta(\mc{C}_{\mc{S}}) &= |\det(D(\alpha))|^2 \delta(\mc{C}_{\Pi_{s\in\mc{S}}\mfk{q}_s}) \nn \\
        &= |N_{\LL/\KK}(\alpha)|^2\delta(\mc{C}_{\Pi_{s\in\mc{S}}\mfk{q}_s})\nn \\
        &\overset{(a)}{=}N(\alpha)\delta(\mc{C}_{\Pi_{s\in\mc{S}}\mfk{q}_s}),
    \end{align*}
    where (a) follows from the fact that $\KK=\Q(\msf{i})$ or $\Q(\omega)$ is a quadratic extension. Plugging \eqref{eqn:det_C_prod} into the above equation results in
    \begin{align}\label{eqn:delta_Cs_principal}
      \delta(\mc{C}_{\mc{S}}) &\geq N(\alpha) N(\Pi_{s\in\mc{S}}\mfk{q}_s) \nn \\
      &= N(\alpha) \Pi_{s\in\mc{S}} N(\mfk{q}_s)=N(\alpha) \Pi_{s\in\mc{S}} q_s,
    \end{align}
    where the last equality follows from the fact that the ideal norm is multiplicative. Moreover, without revealing any side information, the overall codebook would have
    \begin{equation}\label{eqn:delta_0_principal}
        \delta(\mc{C}) = N(\alpha)N(1) = N(\alpha).
    \end{equation}
    Combining \eqref{eqn:R_k_principal}, \eqref{eqn:delta_Cs_principal}, and \eqref{eqn:delta_0_principal} results in
    \begin{align*}
      \Gamma(\mc{C},\mc{S}) &\geq \frac{10\log_{10}(\Pi_{s\in\mc{S}}q_s)}{n \frac{1}{2n}\sum_{s\in\mc{S}} \log_2 (q_s)} \nn \\
       &=\frac{\sum_{s\in\mc{S}}20\log_{10}(q_s)}{\sum_{s\in\mc{S}} \log_2 (q_s)} = 6~\text{dB/bits per real symbol}.
    \end{align*}
    To prove the second statement, we note that if the ideal $\Pi_{s\in\mc{S}}\mfk{q}_s$ is principal, then we can indeed find an element in the ideal such that the inequality in \eqref{eqn:delta_Cs_principal} holds with equality. Hence, if $\mfk{q}_1,\ldots,\mfk{q}_K$ are all principal, $\Gamma(\mc{C},\mc{S})=6$ dB for every $\mc{S}$.
\end{proof}

\subsection{LSTIC with non-principal $\mfk{I}$}\label{subsec:LSTIC_nonprincipal}
We now construct LSTIC from a STBC based on a cyclic division algebra $\mc{A}=(\LL/\KK,\sigma,\gamma)$ and a non-principal ideal $\mfk{I}$ in $\OL$ as described in \eqref{eqn:SP_code_Abar}. Let $\mfk{q}_1,\ldots,\mfk{q}_K$ be $K$ ideals in $\OL$ that are relatively prime and have norm $N(\mfk{q}_k)=q_k$, $k\in\{1,\ldots,K\}$. We again let $\mfk{q}_1\cdot\ldots\cdot\mfk{q}_K=\mfk{q}$. We further assume that each $\mfk{q}_k$ and $\mfk{I}$ are relatively prime, which also implies that $\mfk{q}$ and $\mfk{I}$ are relatively prime. From the second isomorphism theorem \cite{Hungerford74} and CRT, we have
\begin{align*}
    \mfk{I}/\mfk{I}\mfk{q} &\overset{(a)}{=} \mfk{I}/\mfk{I}\cap\mfk{q} \overset{(b)}{\cong} (\mfk{I}+\mfk{q})/\mfk{q} \nn \\
    &\overset{(c)}{=} \OL/\mfk{q}\overset{(d)}{\cong} \OL/\mfk{q}_1\times\ldots\times \OL /\mfk{q}_K \nn \\
    &\cong \mbb{B}_{q_1}\times\ldots\times\mbb{B}_{q_K},
\end{align*}
where both (a) and (c) are due to the fact that $\mfk{q}$ and $\mfk{I}$ are relatively prime, (b) follows from the second isomorphism theorem, and (d) follows from CRT. We use $\mbb{B}_{q_k}$ to denote the quotient ring that is isomorphic to $\OL/\mfk{q}_k$ which has size $q_k$. Let $\mc{M}$ be an isomorphism that maps elements in $\mbb{B}_{q_1}\times\ldots\times\mbb{B}_{q_K}$ to a complete set of coset leaders of $\mfk{I}/\mfk{I}\mfk{q}$.

For $k\in\{1,\ldots,K\}$, we again enforce $w_k=(w_{k,0},\ldots,w_{k,n-1})\in\mbb{B}_{q_k}^n$ where each $\ell\in\{0,\ldots,n-1\}$. The sub-messages $w_{1,\ell},\ldots,w_{K,\ell}$ are collected and encoded into $x_{\ell}$ the signal of the $\ell\in\{0,\ldots,n-1\}$ layer as
\begin{equation*}
    x_{\ell} = \mc{M}(w_{1,\ell},\ldots,w_{K,\ell})\in\mfk{I}/\mfk{I}\mfk{q}, \quad \ell\in\{0,\ldots,n-1\}.
\end{equation*}
The overall codebook now corresponds to $\{x_0+x_1\msf{e}+\ldots+x_{n-1}\msf{e}^{n-1}|x_0,\ldots,x_{n-1}\in\mfk{I}/\mfk{I}\mfk{q}\}$ a subset of $\bar{\mc{A}}_{\mfk{I}}$ and has the matrix form as that in \eqref{eqn:SP_code_matrix} with $x_0,\ldots,x_{n-1}\in\mfk{I}/\mfk{I}\mfk{q}$.

For the proposed LSTIC within this class, we can show the following theorem.
\begin{theorem}\label{thm:SI_gain_nonprincipal}
    For any $\mc{S}\subset\{1,\ldots,K\}$, the side information gain achieved by the proposed LSTIC with non-principal ideal $\mfk{I}$ is lower bounded as
    \begin{equation*}
        \Gamma(\mc{C},\mc{S}) \geq 6 + \gamma_{\mfk{I}}~\text{dB/bits per real symbol},
    \end{equation*}
    where
    \begin{equation}\label{eqn:gamma_I}
        \gamma_{\mfk{I}} = 20\log_{10}\left(\frac{N(\mfk{I})}{\min_{x\in\mfk{I}} N_{\LL/\mbb{Q}}(x)} \right),
    \end{equation}
    is negative and is only a function of $\mfk{I}$ and is independent of $\mc{S}$.
\end{theorem}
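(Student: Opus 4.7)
The plan is to mirror the proof of Theorem~\ref{thm:SI_gain_principal} but to account for the non-principality of $\mfk{I}$ by invoking both directions of Lemma~\ref{lma:bounds_delta_I}, which no longer pinch together as they do in the principal case. Since each $w_k\in\mbb{B}_{q_k}^n$ is spread across $n$ layers of a codeword carrying $2n^2$ real symbols, the per-message rate is still $R_k=\frac{1}{2n}\log_2 q_k$, giving $n_tR_{\mc{S}}=\tfrac12\sum_{s\in\mc{S}}\log_2 q_s$.

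Conditioning on $w_s=v_s$ for $s\in\mc{S}$ and using that $\mc{M}$ is a ring isomorphism (following the same manipulations as in \eqref{eqn:x_given_SI_principal}--\eqref{eqn:uk_principal}), I would decompose each layer as $x_\ell^{\mc{S}}=\xi_\ell^{\mc{S}}+\tilde{x}_\ell^{\mc{S}}$, where $\xi_\ell^{\mc{S}}=\mc{M}(d_{1,\ell},\ldots,d_{K,\ell})$ with $d_{k,\ell}$ as in \eqref{eqn:dk_principal} is known at the receiver, and $\tilde{x}_\ell^{\mc{S}}$ encodes the unknown messages. The crucial ideal-theoretic step is to identify the lattice in which $\tilde{x}_\ell^{\mc{S}}$ takes values: tracing through the compound isomorphism $\mfk{I}/\mfk{I}\mfk{q}\overset{\text{2nd iso}}{\cong}\OL/\mfk{q}\overset{\text{CRT}}{\cong}\prod_k\mbb{B}_{q_k}$ set up in Section~\ref{subsec:LSTIC_nonprincipal}, tuples vanishing in every $\mc{S}$-coordinate correspond to $\prod_{s\in\mc{S}}\mfk{q}_s/\mfk{q}$ inside $\OL/\mfk{q}$, which pulls back to $\bigl(\mfk{I}\cap\prod_{s\in\mc{S}}\mfk{q}_s\bigr)/\mfk{I}\mfk{q}$; since $\mfk{I}$ is coprime to each $\mfk{q}_s$, this intersection collapses to the product $\mfk{I}\prod_{s\in\mc{S}}\mfk{q}_s$. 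Hence, after subtracting the known offset $\mathbf{V}^{\mc{S}}$, the effective subcode is carved from $\mc{C}_{\mfk{I}\prod_{s\in\mc{S}}\mfk{q}_s}$.

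Applying Lemma~\ref{lma:bounds_delta_I} at both ends then yields $\delta(\mc{C}_{\mc{S}})\geq N\bigl(\mfk{I}\prod_{s\in\mc{S}}\mfk{q}_s\bigr)=N(\mfk{I})\prod_{s\in\mc{S}}q_s$ (by multiplicativity of the ideal norm together with $N(\mfk{q}_s)=q_s$), while the unrestricted code satisfies $\delta(\mc{C})\leq\min_{x\in\mfk{I}}N_{\LL/\Q}(x)$. Substituting into \eqref{eqn:side_info_gain} and simplifying,
\begin{equation*}
\Gamma(\mc{C},\mc{S}) \;\geq\; \frac{\gamma_{\mfk{I}}}{\sum_{s\in\mc{S}}\log_2 q_s} \,+\, 20\log_{10}2.
\end{equation*}
Lemma~\ref{lma:bounds_delta_I} itself forces $\gamma_{\mfk{I}}\leq 0$, and since each $q_s\geq 2$ gives $\sum_{s\in\mc{S}}\log_2 q_s\geq 1$, we have $\gamma_{\mfk{I}}/\sum_{s\in\mc{S}}\log_2 q_s\geq \gamma_{\mfk{I}}$, yielding the stated $\Gamma(\mc{C},\mc{S})\geq 6+\gamma_{\mfk{I}}$.

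The main obstacle in this plan is the ideal-identification step: one must verify that the preimage of the $\mc{S}$-zero tuples under the compound isomorphism is \emph{exactly} $\mfk{I}\prod_{s\in\mc{S}}\mfk{q}_s$, not merely a subset, since only then does Lemma~\ref{lma:bounds_delta_I} bound $\delta(\mc{C}_{\mc{S}})$ from below by $N(\mfk{I})\prod q_s$. The coprimality hypothesis between $\mfk{I}$ and each $\mfk{q}_s$ is precisely what makes $\mfk{I}\cap\prod_{s\in\mc{S}}\mfk{q}_s=\mfk{I}\prod_{s\in\mc{S}}\mfk{q}_s$; without it, the inclusion is generally strict and the bound degrades.
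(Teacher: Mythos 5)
Your proof is correct and takes essentially the same route as the paper: conditioning on the side information confines the code to a coset of $\mc{C}_{\mfk{I}\prod_{s\in\mc{S}}\mfk{q}_s}$ (via $\mfk{I}\cap\prod_{s\in\mc{S}}\mfk{q}_s = \mfk{I}\prod_{s\in\mc{S}}\mfk{q}_s$ by coprimality), both ends of Lemma~\ref{lma:bounds_delta_I} supply the two sides of the determinant ratio, and $\sum_{s\in\mc{S}}\log_2 q_s\geq 1$ removes the $\mc{S}$-dependence. Your closing caveat is slightly over-cautious: the lower bound only requires the preimage of the $\mc{S}$-zero tuples to be \emph{contained in} $\mfk{I}\prod_{s\in\mc{S}}\mfk{q}_s$, not equal to it, though the exact equality you establish certainly suffices.
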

\begin{proof}
    We again note that the rate of the message $w_k$ is given by
    \begin{equation}\label{eqn:R_k_nonprincipal}
        R_k = \frac{1}{2n^2}\log_2(q_k^n) = \frac{1}{2n}\log_2(q_k),\quad \text{bits per real symbol}.
    \end{equation}

    We consider a generic receiver having index set $\mc{S}$. Suppose the messages $w_s=v_s$ for $s\in{\mc{S}}$ are known, which means that $w_{s,\ell}=v_{s,\ell}$ for all $\ell\in\{0,\ldots,n-1\}$ are known at the receiver. Similar to \eqref{eqn:x_given_SI_principal}, we have
    \begin{align*}
      x_{\ell}^{\mc{S}} &= \mc{M}(d_{1,\ell},\ldots,d_{K,\ell}) + \mc{M}(u_{1,\ell},\ldots,u_{K,\ell}) + \zeta_{\ell}^{\mc{S}} \nn \\
      &=\xi_{\ell}^{\mc{S}} + \tilde{x}_{\ell}^{\mc{S}},
    \end{align*}
    where $d_{k,\ell}$ and $u_{k,\ell}$ are defined in \eqref{eqn:dk_principal} and \eqref{eqn:uk_principal}, respectively, and $\zeta_{\ell}^{\mc{S}}\in\mfk{Iq}$. Therefore, we have
    \begin{equation*}
      \left(x_{\ell}^{\mc{S}}- \xi_{\ell}^{\mc{S}}\right)\mod \mfk{I}\mfk{q}_s = 0,\quad\text{for all $s\in\mc{S}$},
    \end{equation*}
    which means that $x_{\ell}^{\mc{S}}$ belongs to a shifted version of
    \begin{align*}
        \cap_{s\in\mc{S}}\mfk{I}\mfk{q}_s &\overset{(a)}{=} \cap_{s\in\mc{S}}\left(\mfk{I}\cap\mfk{q}_s\right) \nn \\
        &= \mfk{I}\cap\left(\cap_{s\in\mc{S}}\mfk{q}_s\right) \overset{(b)}{=} \mfk{I}\Pi_{s\in\mc{S}}\mfk{q}_s,
    \end{align*}
    where (a) follows from that $\mfk{I}$ and $\mfk{q}_s$ are relatively prime for each $s$ and (b) is due to the fact that $\mfk{q}_1,\ldots,\mfk{q}_K$ are relatively prime.

    After revealing $w_{\mc{S}}$, the code $\mc{C}_{\mc{S}}$ would correspond to
    \begin{align*}
        &\left\{ (\xi_0^{\mc{S}} + \ldots + \xi_{n-1}^{\mc{S}}\msf{e}^{n-1}) + ( \tilde{x}_0^{\mc{S}} +\ldots + \tilde{x}_{n-1}^{\mc{S}} \msf{e}^{n-1}) | \tilde{x}_0^{\mc{S}},\ldots, \tilde{x}_{n-1}^{\mc{S}}\in \mfk{I}\Pi_{s\in\mc{S}}\mfk{q}_s \right\},
    \end{align*}
    Therefore, each codeword $\mathbf{X}\in\mc{C}_{\mc{S}}$ has the matrix form given by
    \begin{align*}
        \mbf{X} &=  \mbf{V}^{\mc{S}} + \tilde{\mbf{X}}^{\mc{S}},
    \end{align*}
    where
    \begin{equation*}
        \mbf{V}^{\mc{S}} = \begin{pmatrix}
                              \xi_0^{\mc{S}}                     & \xi_1^{\mc{S}}                     & \ldots & \xi_{n-1}^{\mc{S}} \\
                              \gamma\sigma(\xi_{n-1}^{\mc{S}})   & \sigma(\xi_0^{\mc{S}})             &        & \sigma(\xi_{n-2}^{\mc{S}}) \\
                              \vdots                  &                         & \ddots & \vdots \\
                              \gamma\sigma^{n-1}(\xi_1^{\mc{S}}) & \gamma\sigma^{n-1}(\xi_2^{\mc{S}}) & \ldots & \sigma^{n-1}(\xi_0^{\mc{S}}) \\
                            \end{pmatrix},
    \end{equation*}
    and
    \begin{equation*}
        \tilde{\mbf{X}}^{\mc{S}} =\begin{pmatrix}
                              \tilde{x}_0^{\mc{S}}                     & \tilde{x}_1^{\mc{S}}                     & \ldots & \tilde{x}_{n-1}^{\mc{S}} \\
                              \gamma\sigma(\tilde{x}_{n-1}^{\mc{S}})   & \sigma(\tilde{x}_0^{\mc{S}})             &        & \sigma(\tilde{x}_{n-2}^{\mc{S}}) \\
                              \vdots                  &                         & \ddots & \vdots \\
                              \gamma\sigma^{n-1}(\tilde{x}_1^{\mc{S}}) & \gamma\sigma^{n-1}(\tilde{x}_2^{\mc{S}}) & \ldots & \sigma^{n-1}(\tilde{x}_0^{\mc{S}}) \\
                            \end{pmatrix}.
    \end{equation*}
    We can again note that $\tilde{\mbf{X}}^{\mc{S}}$ belongs to
    \begin{equation*}
        \mc{C}_{\mfk{I}\Pi_{s\in\mc{S}}\mfk{q}_s} =\left\{ \left.
                            \begin{pmatrix}
                              x_0                     & x_1                     & \ldots & x_{n-1} \\
                              \gamma\sigma(x_{n-1})   & \sigma(x_0)             &        & \sigma(x_{n-2}) \\
                              \vdots                  &                         & \ddots & \vdots \\
                              \gamma\sigma^{n-1}(x_1) & \gamma\sigma^{n-1}(x_2) & \ldots & \sigma^{n-1}(x_0) \\
                            \end{pmatrix}
      \right| x_0, \ldots, x_{n-1}\in  \mfk{I}\Pi_{s\in\mc{S}}\mfk{q}_s \right\},
    \end{equation*}
    whose minimum determinant can be bounded via Lemma~\ref{lma:bounds_delta_I} by
    \begin{equation*}
        \delta(\mc{C}_{\mfk{I}\Pi_{s\in\mc{S}}\mfk{q}_s}) \geq N(\mfk{I}\Pi_{s\in\mc{S}}\mfk{q}_s).
    \end{equation*}
    One can now remove the contribution of $\mbf{V}^{\mc{S}}$ from the received signal and bound the minimum determinant as
    \begin{align}\label{eqn:delta_Cs_nonprincipal}
      \delta(\mc{C}_{\mc{S}}) &\geq N(\mfk{I}) N(\Pi_{s\in\mc{S}}\mfk{q}_s) \nn \\
      &= N(\mfk{I}) \Pi_{s\in\mc{S}} N(\mfk{q}_s)= N(\mfk{I}) \Pi_{s\in\mc{S}} q_s.
    \end{align}
    When no side information is available, we can again use Lemma~\ref{lma:bounds_delta_I} to bound the minimum determinant as
    \begin{align}\label{eqn:delta_0_nonprincipal}
      N(\mfk{I})\leq \delta(\mc{C}) &\leq  \min_{x\in\mfk{I}} N_{\LL/\mbb{Q}}(x).
    \end{align}
    Combining \eqref{eqn:R_k_nonprincipal}, \eqref{eqn:delta_Cs_nonprincipal}, and \eqref{eqn:delta_0_nonprincipal} results in
    \begin{align*}
      \Gamma(\mc{C},\mc{S}) &\geq \frac{10\log_{10}\left(  N(\Pi_{s\in\mc{S}}\mfk{q}_s)\frac{N(\mfk{I})}{\min_{x\in\mfk{I}} N_{\LL/\mbb{Q}}(x)}\right)}{n \frac{1}{2n}\sum_{s\in\mc{S}} \log_2 (p_s)} \nn \\
       &=\frac{\sum_{s\in\mc{S}}20\log_{10}(q_s)}{\sum_{s\in\mc{S}} \log_2 (q_s)} + \frac{20\log_{10}\left(\frac{N(\mfk{I}) }{min_{x\in\mfk{I}} N_{\LL/\mbb{Q}}(x)}\right)}{\sum_{s\in\mc{S}} \log_2 (p_s)} \nn \\
       &= 6 + \gamma_{\mfk{I},\mc{S}}~\text{dB/bits per real symbol}.
    \end{align*}
    Noting that $\gamma_{\mfk{I},\mc{S}}\leq 0$ from \eqref{eqn:delta_0_nonprincipal} and $\gamma_{\mfk{I},\mc{S}}\geq \gamma_{\mfk{I}}$ completes the proof.
   \end{proof}

\section{Layered Golden-Coded Index Coding}\label{sec:L_GIC}
In this section, we propose layered Golden-coded index coding, a family of LSTIC constructed from Golden code. To provide a concrete illustration of how the proposed scheme works, we will walk through this example in detail. Before proceeding, we note that the layered Golden-coded index coding proposed here is different, in essence, from the Golden-coded index coding in \cite{huang17GCIC}. Here, we partition the code layer by layer while in \cite{huang17GCIC} we directly tackle the Golden algebra. We would like to emphasize that neither of these two schemes subsumes the other as a special case; however, the approach taken in \cite{huang17GCIC} only works for some particular primes.

Let $\LL=\mbb{Q}(\msf{i},\sqrt{5})$ a quadratic extension of $\KK=\mbb{Q}(\msf{i})$ and consider the non-trivial $\mbb{Q}(\msf{i})$-automorphism $\sigma:\sqrt{5}\rightarrow -\sqrt{5}$. Also, let $\gamma=\msf{i}$. The Golden code is built from the Golden algebra given by
\begin{equation*}
    \mc{G}=(\mbb{Q}(\msf{i},\sqrt{5})/\mbb{Q}(i),\sigma,\msf{i}) = \left\{ x_0 + x_1 \msf{e}| x_0, x_1\in\mbb{Q}(i,\sqrt{5}) \right\},
\end{equation*}
where $\msf{e}^2=\msf{i}$ and $z\msf{e}=\msf{e}\sigma(z)$. The ring of integers of $\LL$ is $\OL=\mbb{Z}[\msf{i}][\theta]$ where $\theta=\frac{1+\sqrt{5}}{2}$. Let $\mfk{I}=\alpha\OL$ be the principal ideal generated by $\alpha=1+\msf{i}\bar{\theta}$ where $\bar{\theta}\defeq \sigma(\theta)$. The Golden code \cite{belfiore05GC} corresponds to
\begin{equation*}
    \mc{G}_{\mfk{I}} = \left\{x_0 + x_1 \msf{e} | x_0,x_1\in\alpha\OL \right\},
\end{equation*}
which can be put into the matrix form
\begin{align*}
    &\mc{C}_{\mfk{I}} = \left\{\left.\frac{1}{\sqrt{5}}\begin{pmatrix}
                      \alpha x_0  & \alpha x_1 \\
                      i\sigma(\alpha x_1)  & \sigma(\alpha x_0) \\
                    \end{pmatrix} \right| x_0,x_1\in\mbb{Z}[i][\theta]
    \right\} \nonumber \\
    &= \left\{\left.\frac{1}{\sqrt{5}}\begin{pmatrix}
                      \alpha (a+b\theta) & \alpha (c+d\theta) \\
                      i\sigma(\alpha) (c+d\bar{\theta}) & \sigma(\alpha)(a+b\bar{\theta}) \\
                    \end{pmatrix} \right|a,b,c,d\in\Zi
    \right\}.
\end{align*}


The proposed layered Golden-coded index coding can be categorized into the class in Section~\ref{subsec:LSTIC_principal}. Let $\mfk{q}_1,\mfk{q}_2,\ldots,\mfk{q}_K$ be prime ideals in $\OL$ that are relatively prime. Let $\mfk{q}_1\ldots\mfk{q}_K\defeq \mfk{q}$. Also, let $|\Ok/\mfk{q}_k|=N(\mfk{q}_k)\defeq q_k$ for $k\in\{1,\ldots,K\}$ where $q_k$s are not necessarily primes. From CRT, we have
\begin{align*}
    \Ok/\mfk{q} \cong \Ok/\mfk{q}_1 \times\ldots\times \Ok/\mfk{q}_K \cong \mbb{B}_{q_1} \times \ldots\times\mbb{B}_{q_K},
\end{align*}
where $\mbb{B}_{q_k}=\Ok/\mfk{q}_k$ is a commutative ring with size $q_k$. This guarantees the existence of $\mc{M}:\mbb{B}_{q_1} \times\ldots\times\mbb{B}_{q_K} \rightarrow \OL/\mfk{q}$ an isomorphism that maps the messages to a complete set of coset leaders of $\OL/\mfk{q}$ with minimum energy. In the proposed layered Golden-coded index coding scheme, we let $w_k\in\mbb{B}_{q_k}^2$ and split it into $w_{k,0},w_{k,1}\in\mbb{B}_{q_k}$.

The sub-messages $w_{1,\ell},\ldots,w_{K,\ell}$, for $\ell\in\{0,1\}$, are encoded onto $\OL/\mfk{q}$ via $\mc{M}$ to form
\begin{equation}\label{eqn:GCIC}
    x_{\ell} = \mc{M}(w_{1,\ell},\ldots,w_{K,\ell}) \in \OL/\mfk{q},\quad \ell\in\{0,1\}.
\end{equation}
The overall codebook becomes a Golden code
\begin{align}\label{eqn:GCIC_matrix}
    \mc{C} = \left\{\left.\frac{1}{\sqrt{5}}\begin{pmatrix}
                      \alpha x_0  & \alpha x_1 \\
                      i\sigma(\alpha x_1)  & \sigma(\alpha x_0) \\
                    \end{pmatrix} \right| x_0,x_1\in \OL/\mfk{q}
    \right\}.
\end{align}

From Theorem~\ref{thm:SI_gain_principal}, we obtain the following corollary. Note that the proof of this corollary is almost identical to that of Theorem~\ref{thm:SI_gain_principal}. However, as mentioned earlier, in order to provide a complete illustration, we still present the proof.
\begin{corollary}
    For any $\mc{S}\subset\{1,\ldots,K\}$, the proposed layered Golden-coded index coding provides $\Gamma(\mc{C},\mc{S}) = 6$ dB/bits per real symbol.
\end{corollary}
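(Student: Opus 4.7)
The corollary is the specialization of Theorem~\ref{thm:SI_gain_principal} to $n=2$, $\KK=\mbb{Q}(\msf{i})$, $\LL=\mbb{Q}(\msf{i},\sqrt{5})$, $\gamma=\msf{i}$, and the principal ideal $\mfk{I}=\alpha\OL$ with $\alpha=1+\msf{i}\bar{\theta}$. The lower bound $\Gamma(\mc{C},\mc{S})\geq 6$ dB/bits per real symbol is therefore already immediate from the first part of that theorem. My plan is to retrace the argument in this concrete setting so as to (i) strengthen the inequality to an equality and (ii) supply the worked-out illustration the authors advertise.

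I would first compute the message rate $R_k=\tfrac{1}{2n}\log_2(q_k)=\tfrac{1}{4}\log_2(q_k)$ bits per real symbol. Next, for a generic receiver with index set $\mc{S}$ and known values $v_s$ for $s\in\mc{S}$, I would use the isomorphism $\mc{M}$ of~\eqref{eqn:GCIC} together with CRT to split each layer's symbol as $x_\ell^{\mc{S}}=\xi_\ell^{\mc{S}}+\tilde{x}_\ell^{\mc{S}}$, exactly as in the proof of Theorem~\ref{thm:SI_gain_principal}, with $\xi_\ell^{\mc{S}}$ computable from the side information and $\tilde{x}_\ell^{\mc{S}}$ lying in a coset of $\Pi_{s\in\mc{S}}\mfk{q}_s$. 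Inserting this decomposition into~\eqref{eqn:GCIC_matrix} and subtracting off the known matrix $\mbf{V}^{\mc{S}}$ leaves an effective codebook $D(\alpha)\cdot\mc{C}_{\Pi_{s\in\mc{S}}\mfk{q}_s}$, yielding
$\delta(\mc{C}_{\mc{S}})=|N_{\LL/\KK}(\alpha)|^2\cdot\delta(\mc{C}_{\Pi_{s\in\mc{S}}\mfk{q}_s})=N(\alpha)\cdot\delta(\mc{C}_{\Pi_{s\in\mc{S}}\mfk{q}_s})$, where the second equality uses that $\KK/\Q$ is a quadratic extension so that $|N_{\LL/\KK}(\alpha)|^2=N_{\LL/\Q}(\alpha)=N(\alpha)$.

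The crux, and the only place the argument departs from the generic lower bound, is to replace the one-sided estimate of Lemma~\ref{lma:bounds_delta_I} by an equality. I would invoke the fact that $\OL=\Zi[\theta]$, the ring of integers of $\mbb{Q}(\msf{i},\sqrt{5})$, is a principal ideal domain (class number one). Hence every prime ideal $\mfk{q}_k$, and so every product $\Pi_{s\in\mc{S}}\mfk{q}_s=\beta\OL$, is principal; the generator $\beta$ itself realizes $|N_{\LL/\Q}(\beta)|=N(\beta\OL)$, so the upper and lower bounds in Lemma~\ref{lma:bounds_delta_I} coincide and both equal $N(\Pi_{s\in\mc{S}}\mfk{q}_s)=\Pi_{s\in\mc{S}}q_s$. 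Combined with the $\mc{S}=\emptyset$ case $\delta(\mc{C})=N(\alpha)$, this gives the clean ratio $\delta(\mc{C}_{\mc{S}})/\delta(\mc{C})=\Pi_{s\in\mc{S}}q_s$. Plugging into $\Gamma(\mc{C},\mc{S})$ with $n_t=2$ collapses to the arithmetic identity $20\log_{10}(q_s)/\log_2(q_s)=20\log_{10}(2)\approx 6$, producing exactly the claimed value. The one genuinely external ingredient, and hence the main obstacle, is the class-number-one property of $\mbb{Q}(\msf{i},\sqrt{5})$; once that standard number-theoretic fact is cited, the rest is bookkeeping identical to the proof of Theorem~\ref{thm:SI_gain_principal}.
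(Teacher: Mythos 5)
Your proposal is correct and follows essentially the same approach as the paper: both rely on specializing Theorem~\ref{thm:SI_gain_principal} to the Golden code and both hinge on the fact that $\OL=\Z[\msf{i}][\theta]$ is a principal ideal domain, which forces the two bounds of Lemma~\ref{lma:bounds_delta_I} to coincide so the inequality becomes an equality. The paper's write-up simply carries out the determinant computation concretely on the $2\times 2$ matrix form rather than quoting the theorem, but the substance (CRT split of each layer, subtraction of the known part, norm multiplicativity, and the PID/class-number-one fact) is identical to yours.
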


\begin{proof}
The rate of the message $w_k$ is given by
\begin{align}\label{eqn:R_k}
  R_{k} &= \frac{1}{8} \log_2(N(\mfk{q}_k)^2)\quad\text{bits per real symbol.}
\end{align}

Suppose some messages $w_{\mc{S}}\defeq\{w_k=v_k|k\in\mc{S}\}$ are known; this means that both $w_{\mc{S},\ell}\defeq\{w_{k,\ell}=v_{k,\ell}|k\in\mc{S}\}$ for $\ell=0$ and $\ell=1$ are known. Therefore, from Section~\ref{subsec:LSTIC_principal}, $x_{\ell}$, $\ell\in\{0,1\}$, belongs to a shifted version of $\Pi_{k\in\mc{S}}\mfk{q}_k$. Thus, after revealing $w_{\mc{S}}$, the code $\mc{C}_{\mc{S}}$ becomes a shifted version of
\begin{align*}
    &\left\{\left.\frac{1}{\sqrt{5}}\begin{pmatrix}
                      \alpha x_0  & \alpha x_1 \\
                      i\sigma(\alpha x_1)  & \sigma(\alpha x_0) \\
                    \end{pmatrix} \right| x_0,x_1\in \Pi_{k\in\mc{S}}\mfk{q}_k
    \right\}.
\end{align*}
For every codeword $\tilde{\mathbf{X}}^{\mc{S}}\in\mc{C}_{\mc{S}}$ corresponding to $x_0,x_1\in \Pi_{k\in\mc{S}}\mfk{q}_k$, the determinant is given by
\begin{align*}
  \det(\tilde{\mathbf{X}}^{\mc{S}}) &= \frac{1}{5}\det \begin{pmatrix}
                      \alpha x_0  & \alpha x_1 \\
                      i\sigma(\alpha x_1)  & \sigma(\alpha x_0) \\
                    \end{pmatrix}
   \nonumber \\
   &\overset{(a)}{=} \frac{1}{5}\det\begin{pmatrix}
                                              \alpha x_0 &  \alpha x_1\\
                                              i\sigma(\alpha)\sigma(x_1) & \sigma(\alpha)\sigma(x_0)) \\
                                            \end{pmatrix}
   \\
   &=\frac{1}{5}\det\begin{pmatrix}
                      \alpha & 0 \\
                      0 & \sigma(\alpha) \\
                    \end{pmatrix}
   \det\begin{pmatrix}
                                              x_0 &  x_1 \\
                                              i\sigma(x_1) & \sigma(x_0) \\
                                            \end{pmatrix}
   \nonumber \\
   &=\frac{1}{5}N_{\textrm{rd}}(\alpha)
   \det\begin{pmatrix}
                                              x_0 &  x_1 \\
                                              i\sigma(x_1) & \sigma(x_0) \\
                                            \end{pmatrix},
\end{align*}
where (a) is due to that $\sigma$ is a homomorphism. Now, plugging $|N_{\textrm{rd}}(\alpha)|^2=5$ results in
\begin{align}
  \delta(\mc{C}_{\mc{S}}) &= \frac{1}{5} \left|\det\begin{pmatrix}
                                              x_0 &  x_1 \\
                                              i\sigma(x_1) & \sigma(x_0) \\
                                            \end{pmatrix}\right|^2 \nonumber \\
  &\overset{(a)}{=} \frac{1}{5} N(\Pi_{k\in\mc{S}}\mfk{q}_k) \overset{(b)}{=} \frac{1}{5}\Pi_{k\in\mc{S}}N(\mfk{q}_k), \label{eqn:GC_delta_S}
\end{align}
where (a) follows from \cite[Corollary 3]{Oggier06perfect} and the fact that $\OL=\Z[\msf{i}][\theta]$ is a principal ideal domain and (b) follows from the fact that algebraic norm is multiplicative. Now, combining what we have obtained in \eqref{eqn:R_k} and \eqref{eqn:GC_delta_S} and the fact that $\delta(\mc{C}) = 1/5$ result in
\begin{equation*}
    \Gamma(\mc{C},\mc{S}) = \frac{10\log_{10}(\Pi_{k\in\mc{S}}N(\mfk{q}_k))}{2\frac{1}{4}\sum_{k\in\mc{S}} \log_2 N(\mfk{q}_k)} = 6~\text{dB/bits per real symbol}.
\end{equation*}
\end{proof}

\subsection{Examples and Simulation Results}
In Table~\ref{tbl:GC}, we factorize each prime $p<100$ into prime ideals in $\OL$ via Magma \cite{magma}. Any pair of ideals in this table is relatively prime and thus $\mfk{q}_k$ can be chosen as product of some prime ideals that have not been selected for some $\mfk{q}_{k'}$, $k'\neq k$. In Table~\ref{tbl:GC}, we show ideals and their inertial degrees $f$. The ramification index of each prime ideal lying above $p\neq 2,5$ is 1 and is 2 for prime ideals lying above $2, 5$. This can be seen by observing that
\begin{equation*}
    d_{\LL} =  5^2\cdot 4^2,
\end{equation*}
which has prime factors 2 and 5. Moreover, since $\OL$ is a principal ideal domain, so every $p\OL$ can be factorized into principal prime ideals.
\begin{table}
\centering
\caption{Prime factorization of $p<100$ in $\mbb{Z}[i][\theta]$ where $\theta=\frac{1+\sqrt{5}}{2}$.}
\label{tbl:GC}
\begin{tabular}{|c|c|c|}
  \hline
  $p$ & $\mfk{p}$ & $f$\\
\hline
  2 & $(1+\msf{i})$ & 2 \\
  3 & $(\bar{\theta}-i\theta), (\bar{\theta}+\msf{i}\theta)$ & 2 \\
  5 & $(1+\msf{i}\bar{\theta}), (1-\msf{i}\bar{\theta})$ & 1 \\
  7 & $((1+\theta)+\msf{i}(1+\bar{\theta})), ((1+\theta)-\msf{i}(1+\bar{\theta}))$ & 2  \\
  11 & $(3\msf{i}\theta-\msf{i}), (3\msf{i}\bar{\theta}-\msf{i})$ & 2 \\
  13 & $(2+3\msf{i}), (2-3\msf{i})$ & 2 \\
  17 & $(4+\msf{i}), (4-\msf{i})$ & 2 \\
  19 & $(4\msf{i}\theta-i), (4\msf{i}\bar{\theta}-\msf{i})$ & 2  \\
  23 & $((3\bar{\theta}-1)+\msf{i}(3\theta-1)), ((3\bar{\theta}-1)-\msf{i}(3\theta-1))$ & 2  \\
  29 & $(2\msf{i}+\theta),(2\msf{i}+\bar{\theta}), (\bar{\theta}-2\msf{i}), (\theta-2\msf{i})$ & 1 \\
  31 & $(2-5\bar{\theta}),(2-5\theta)$ & 2  \\
  37 & $(6+\msf{i}), (6-\msf{i})$ & 2  \\
  41 & $(\bar{\theta}+\msf{i}(2\theta-1)), (\bar{\theta}-\msf{i}(2\theta-1)), (\theta-\msf{i}(2\theta-1)), (\theta+\msf{i}(2\theta-1))$  & 1  \\
  43 & $((4+\theta) + \msf{i}(4+\bar{\theta})) , ((4+\bar{\theta}) + \msf{i}(4+\theta))$ & 2 \\
  47 & $((2+3\theta)-\msf{i}(2+3\bar{\theta})),((2+3\bar{\theta)}-\msf{i}(2+3\theta))$  & 2 \\
  53 & $(7+2\msf{i}), (7-2\msf{i})$ & 2 \\
  59 & $(7\theta-2), (7\bar{\theta}-2)$  & 2 \\
  61 & $((2\bar{\theta}-1)+\msf{i}(\theta+1)), ((2\theta-1)+\msf{i}(\bar{\theta}+1)), ((2\theta-1)+\msf{i}(\theta+1)), ((2\bar{\theta}-1)+\msf{i}(\bar{\theta}+1))$  & 1 \\
  67 & $((5\bar{\theta}-1)+\msf{i}(5\theta-1)), ((5\theta-1)+\msf{i}(5\bar{\theta}-1))$ & 2 \\
  71 & $(8+\theta), (8+\bar{\theta})$  & 2 \\
  73 & $(3+8\msf{i}) , (3-8\msf{i})$  & 2 \\
  79 & $(8\theta - 3) , (8\bar{\theta} - 3)$ & 2 \\
  83 & $((4+3\theta)+\msf{i}(4+3\bar{\theta})), ((4+3\bar{\theta})+\msf{i}(4+3\theta))$ & 2 \\
  89 & $(2\bar{\theta}-\msf{i}(\theta+1)), (2\theta-\msf{i}(\bar{\theta}+1)), (2\bar{\theta}+\msf{i}(\theta+1)), (2\bar{\theta}+\msf{i}(\bar{\theta}+1))$ & 1 \\
  97 & $(9+4\msf{i}) , (9-4\msf{i})$ & 2 \\
  \hline
\end{tabular}
\end{table}

Simulation results for the proposed layered Golden-coded index coding are provided in Fig.~\ref{fig:CER_GCIC_q3_q5}. In this figure, three sets of simulations are performed. In the first one, we constructed the layered Golden-coded index coding with two principal ideals generated by $\beta_1=(\bar{\theta}-i\theta)$ and $\beta_2=(\bar{\theta}+\msf{i}\theta)$, respectively. From Table~\ref{tbl:GC}, we see that each of these ideals corresponds to $p=3$ and has inertial degree 2; thus, it has norm equal to $3^2=9$. Thus, each message $w_k\in\mbb{B}_9^2$, which is then split into sub-messages $w_{k,1},w_{k,2}\in\mbb{B}_9$. The sub-messages $w_{1,\ell}$ and $w_{2,\ell}$ are then encoded into $x_{\ell}$ via \eqref{eqn:GCIC}, which is then put into the matrix form in \eqref{eqn:GCIC_matrix}. Moreover, from Table~\ref{tbl:GC}, we know that $3\OL = \beta_1\beta_2\OL$. Therefore, the overall codebook corresponds to \eqref{eqn:GCIC_matrix} with $x_0,x_1\in\OL/3\OL$. Simulation results in Fig.~\ref{fig:CER_GCIC_q3_q5} show that revealing either message to the receiver provides roughly 7.3 dB of SNR gain. This conforms with the analysis that when reveal either message, we expect to achieve SNR gain
\begin{equation*}
    \frac{1}{4} 10\log_{10}\left(\frac{118}{10}\right) + \frac{1}{2} 10\log_{10}\left( 9 \right) \approx 7.45~\text{dB},
\end{equation*}
where $118$ and $10$ inside the first logarithm are $N_{\mc{C}}$ and $N_{\mc{C}_{\mc{S}}}$, respectively and the $9$ inside the second logarithm is the ratio of $\delta(\mc{C}_{\mc{S}})$ and $\delta(\mc{C})$.

In the second set of simulations, the two principal ideals are replaced by those generated by $\beta_1=(1+\msf{i}\bar{\theta})^2$ and $\beta_2=(1-\msf{i}\bar{\theta})^2$, respectively. From Table~\ref{tbl:GC}, we see that $(1+\msf{i}\bar{\theta})$ and $(1-\msf{i}\bar{\theta})$ are both corresponding to $p=5$ with inertial degree 1; thus, $\beta_1\OL$ and $\beta_2\OL$ both have norm equal to $5^2=25$. Moreover, $5\OL = \beta_1\beta_2\OL$; thereby, the overall codebook corresponds to \eqref{eqn:GCIC_matrix} with $x_0,x_1\in\OL/5\OL$. Simulation results in Fig.~\ref{fig:CER_GCIC_q3_q5} show that revealing either message to the receiver provides roughly 10 dB of SNR gain. This again coincides with the analysis which says that by revealing one side information , we can expect an SNR gain of
\begin{equation*}
    \frac{1}{4} 10\log_{10}\left(\frac{656}{32}\right) + \frac{1}{2} 10\log_{10}\left( 25 \right) \approx 10.27~\text{dB},
\end{equation*}
where $656$ and $32$ inside the first logarithm are $N_{\mc{C}}$ and $N_{\mc{C}_{\mc{S}}}$, respectively and the $25$ inside the second logarithm is the ratio of $\delta(\mc{C}_{\mc{S}})$ and $\delta(\mc{C})$. In the last set of simulations, the two prime ideals corresponding to $p=7$ is considered. Simulation results show that a roughly 12.1 dB SNR gain can be obtained by revealing either of the message. This again can be well predicted by the analysis which indicates that we can expect an SNR gain of
\begin{equation*}
    \frac{1}{4} 10\log_{10}\left(\frac{2042}{41}\right) + \frac{1}{2} 10\log_{10}\left( 49 \right) \approx 12.69~\text{dB},
\end{equation*}
where $2042$ and $41$ inside the first logarithm are $N_{\mc{C}}$ and $N_{\mc{C}_{\mc{S}}}$, respectively and the $49$ inside the second logarithm is the ratio of $\delta(\mc{C}_{\mc{S}})$ and $\delta(\mc{C})$.

\begin{figure}
    \centering
    \includegraphics[width=4.5in]{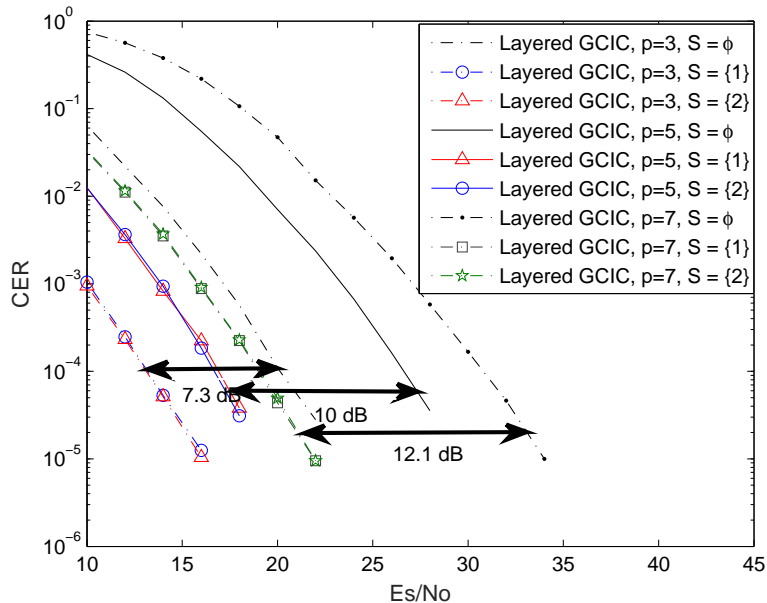}
    \caption{CER performance for the proposed layered Golden-coded index coding.}
    \label{fig:CER_GCIC_q3_q5}
\end{figure}

\begin{remark}
We end this section by showing that the proposed layered Golden-coded index coding is not a special case of the Golden-coded index coding in \cite{huang17GCIC} and vice versa. The Golden-coded index coding in \cite{huang17GCIC} is constructed over $\Z[\msf{e}][\theta]$ with ideals of the form $(\alpha+\beta\msf{e})\Z[\msf{e}][\theta]$ where $\alpha,\beta\in\Zi$. Consider $p=17$ for which \cite[Example 6]{huang17GCIC} indicates that $17\Z[\msf{e}][\theta]$ can be partitioned into 4 ideals, each with norm $17^2$. So the Golden-coded index coding can take messages of size $17^2$. To do the same for our layered scheme, it requires an ideal in $\Z[\msf{i}][\theta]$ to have norm $17$, which is impossible from the result in Table~\ref{tbl:GC}. Now, let us consider $p=29$ where Table~\ref{tbl:GC} shows that $29\Z[\msf{i}][\theta]$ can be partitioned into four ideals, each with norm $29$. Hence, the proposed layered Golden-coded index coding can take messages of size $29^2$. This will require $29\Z[\msf{e}][\theta]$ to be partitioned into ideals of the form $\alpha+\beta\msf{e}$ with norm $29^2$. However, using Magma, we obtain that $29\Z[\msf{e}][\theta]=\mfk{I}_1\mfk{I}_2\mfk{I}_3\mfk{I}_4$ with $\mfk{I}_1=(\bar{\theta}+2\msf{i})\Z[\msf{e}][\theta]$, $\mfk{I}_2=(\bar{\theta}-2\msf{i})\Z[\msf{e}][\theta]$, $\mfk{I}_3=(\theta+2\msf{i})\Z[\msf{e}][\theta]$, and $\mfk{I}_4=(\theta-2\msf{i})\Z[\msf{e}][\theta]$, where none of these satisfies the form required by the Golden-coded index coding.
\end{remark}

\section{LSTIC based on $3\times 3$ perfect STBC}\label{sec:3by3}
Let $\zeta_{7}$ be the 7th root of unity and let $\theta \defeq \zeta_7 + \zeta_7^{-1} = 2\cos\left(\frac{2\pi}{7}\right)$. Also, let $\KK=\Q(\omega)$ and let $\LL=\Q(\omega,\theta)$ the field extension of $\KK$ with $[\LL:\KK]=3$. Consider the cyclic division algebra
\begin{equation*}
    \mc{A} = (\LL/\KK,\sigma,\gamma) = \{x_0 + x_1\msf{e} + x_2\msf{e}^2 | x_0,\ldots,x_2\in\LL\},
\end{equation*}
where $\sigma:\zeta_7+\zeta_7^{-1}\rightarrow \zeta_7^2+\zeta_7^{-2}$ and $\msf{e}^3 = \gamma \defeq j$. A $3\times 3$ perfect STBC is constructed from
\begin{equation*}
    \bar{\mc{A}}_{\mfk{I}} = \{\alpha x_0 + \alpha x_1\msf{e} + \alpha x_2\msf{e}^2 | x_0,\ldots,x_2\in\OL\},
\end{equation*}
where $\alpha = 1+\omega+\theta$. The code will have the matrix form shown in \eqref{eqn:SP_code_matrix_principal}.

One can now follow Section~\ref{subsec:LSTIC_principal} to construct LSTIC based on $3\times 3$ perfect STBC. As a result, we have the following corollary whose proof is identical to that of Theorem~\ref{thm:SI_gain_principal} together with the fact that $\OL=\Zw[\theta]$ is a principal ideal domain.
\begin{corollary}
    For any $\mc{S}\subset\{1,\ldots,K\}$, the proposed LSTIC based on $3\times 3$ perfect STBC provides $\Gamma(\mc{C},\mc{S}) = 6$ dB/bits per real symbol.
\end{corollary}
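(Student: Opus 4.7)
My plan is to invoke Theorem~\ref{thm:SI_gain_principal} directly, since the construction fits squarely inside the principal-ideal framework of Section~\ref{subsec:LSTIC_principal}. First I would verify that the ingredients match the hypotheses of the theorem: the cyclic algebra $\mc{A}=(\LL/\KK,\sigma,\gamma)$ with $\KK=\Q(\omega)$, $\LL=\Q(\omega,\theta)$, and $\gamma=j$ is a division algebra (so $\gamma,\gamma^2$ are not norms from $\LL$, as required by the remark preceding Section~\ref{subsec:LSTIC_principal}), and $\gamma\in\Ok=\Zw$. Moreover, the shaping ideal $\mfk{I}=\alpha\OL$ with $\alpha=1+\omega+\theta$ is principal by construction. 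Applying the first half of Theorem~\ref{thm:SI_gain_principal} with $n=3$ immediately yields the universal lower bound $\Gamma(\mc{C},\mc{S})\geq 6$ dB/bits per real symbol for every $\mc{S}\subset\{1,\ldots,K\}$.

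To upgrade this to an equality I would invoke the second assertion of Theorem~\ref{thm:SI_gain_principal}, which requires each $\mfk{q}_k$ to be principal. The decisive structural fact is that $\OL=\Zw[\theta]$ is a principal ideal domain, so every $\mfk{q}_k$---and hence every product $\Pi_{s\in\mc{S}}\mfk{q}_s$---is automatically principal. Under this condition the outer inequality in Lemma~\ref{lma:bounds_delta_I} collapses: if $\Pi_{s\in\mc{S}}\mfk{q}_s=\beta\OL$, choosing $x=\beta$ realizes $|N_{\LL/\Q}(\beta)|=N(\Pi_{s\in\mc{S}}\mfk{q}_s)$, so $\delta(\mc{C}_{\Pi_{s\in\mc{S}}\mfk{q}_s})=N(\Pi_{s\in\mc{S}}\mfk{q}_s)$ exactly. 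Propagating this through the chain in the proof of Theorem~\ref{thm:SI_gain_principal} gives $\delta(\mc{C}_{\mc{S}})=N(\alpha)\prod_{s\in\mc{S}}q_s$ and $\delta(\mc{C})=N(\alpha)$; together with $R_s=\frac{1}{2n}\log_2 q_s=\frac{1}{6}\log_2 q_s$, the definition \eqref{eqn:side_info_gain} of $\Gamma(\mc{C},\mc{S})$ produces exactly $6$ dB/bits per real symbol.

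I do not anticipate any serious obstacle: the entire argument is inherited from Theorem~\ref{thm:SI_gain_principal} simply by substituting $n=3$ and the specific data $(\KK,\LL,\sigma,\gamma,\alpha)$ listed above. The only non-mechanical ingredient is the PID property of $\Zw[\theta]$, which is a standard number-theoretic fact about the cubic subfield of $\Q(\omega,\zeta_7)$ (the class number is $1$ and can be certified by a Minkowski bound computation together with explicit principality checks for the finitely many prime ideals below the bound). With that in hand, everything else---the layer-by-layer CRT decomposition, the decomposition $\mbf{X}=\mbf{V}^{\mc{S}}+\tilde{\mbf{X}}^{\mc{S}}$, and the multiplicativity of the ideal norm---transfers verbatim from the Golden-code case to the $3\times 3$ setting.
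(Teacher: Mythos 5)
Your proposal is correct and matches the paper's argument exactly: the paper states that the proof is identical to that of Theorem~\ref{thm:SI_gain_principal} combined with the fact that $\OL=\Zw[\theta]$ is a principal ideal domain, which is precisely the reduction you carry out. Your extra verification of the hypotheses (division algebra, $\gamma\in\Ok$, principality of $\mfk{I}$, and the PID property giving principal $\mfk{q}_k$) just spells out what the paper leaves implicit.
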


\subsection{Examples and Simulation Results}
Here, we again factorize each prime $p<100$ into prime ideals via Magma. We show ideals and their inertial degrees $f$. The ramification index of each prime ideal lying above $p$ is given by
\begin{equation*}
    e = \left\{
            \begin{array}{ll}
              2, & \hbox{$p = 3$;} \\
              3, & \hbox{$p=7$;} \\
              1, & \hbox{otherwise.}
            \end{array}
          \right.
\end{equation*}
This can be justified by observing that
\begin{equation*}
    d_{\LL} =  3^3 7^4,
\end{equation*}
which has prime factors 3 and 7. Again, since $\OL$ is a principal ideal domain, every $p\OL$ can be factorized into principal prime ideals as shown in Table~\ref{tbl:LSTIC3}.
\begin{table}
\centering
\caption{Prime factorization of $p<100$ in $\mbb{Z}[\omega][\theta]$ where $\theta=\zeta_{7} + \zeta_{7}^{-1}$.}
\label{tbl:LSTIC3}
\begin{tabular}{|c|c|c|c|}
  \hline
  $p$ & $\mfk{p}$ & $f$ \\
\hline
  2 & $(2)$ & 6\\
  3 & $(1+\omega)$ & 3\\
  5 & $(5)$ & 6\\
  7 & $((\omega - 1)\theta^2 + (\omega - 1)\theta - \omega + 2), ((-\omega + 1)\theta^2 - (\omega - 1)\theta + 2\omega - 1)$ & 1\\
  11 & $(11)$ & 6\\
  13 & $(\omega\theta^2 + (\omega - 1)\theta - \omega - 1), ((\omega - 1)\theta^2 - \theta - \omega + 1), (-\theta^2 - \omega\theta + 2)$ & 1 \\
     & $ (\omega\theta^2 + \theta - 2\omega + 1), (-\omega\theta^2 - \theta + \omega), (\omega\theta^2 + \theta - 2\omega) $ & \\
  17 & $(17)$ & 6\\
  19 & $(3-5\omega), (3\omega - 5)$ & 3\\
  23 & $(23)$ & 6\\
  29 & $((2\omega - 2)\theta^2 - (\omega - 1)\theta - 4\omega + 4), (3\omega\theta^2 + 2\omega\theta - 4\omega), (3\omega\theta^2 + \omega\theta - 4\omega)$ & 2\\
  31 & $(\omega + 5), (5\omega + 1)$ & 3\\
  37 & $(7\omega - 4), (3\omega + 4)$ & 3\\
  41 & $((\omega - 1)\theta^2 - (2\omega - 2)\theta - 4\omega + 4), (3\theta^2 + \theta - 3), ((2-2\omega )\theta^2 - (3\omega - 3)\theta + 4\omega - 4)$ & 2\\
  43 & $((\omega - 1)\theta^2 + \theta - 2\omega + 2), (\theta^2 + (-\omega + 2)\theta - 1), ((\omega + 1)\theta^2 + \theta - 2\omega - 1)$ & 1\\
     & $(\theta^2 + (\omega - 1)\theta - 2), (-\theta^2 + (-\omega - 1)\theta + 1), ((-\omega + 2)\theta^2 + \theta + 2\omega - 3)$ & \\
  47 & $(47)$ & 6\\
  53 & $(53)$ & 6\\
  59 & $(59)$ & 6\\
  61 & $(5\omega + 4), (4\omega + 5)$ & 3\\
  67 & $(7-9\omega), (7\omega - 9)$ & 3\\
  71 & $(\theta^2 + \theta + 3), (4\theta^2 + 3\theta - 5), ((\omega - 1)\theta^2 - 6\omega + 6)$ & 2\\
  73 & $(8\omega - 9), (9-\omega)$ & 3\\
  79 & $(7\omega + 3), (3\omega + 7)) $& 3\\
  83 & $(2\theta^2 - 2\theta - 5), (4\theta^2 + 2\theta - 5), (2\omega\theta^2 + 4j\theta - 3\omega)$ & 2\\
  89 & $(89)$ & 6\\
  97 & $(-\theta^2 - \theta - 2\omega + 3), (\theta^2 - 2\omega), (\theta - 2\omega + 2)$ & 1\\
     & $ (-\theta^2 - \theta + 2\omega + 1), (\theta^2 + 2\omega - 2), (\theta + 2\omega)$ & \\
  \hline
\end{tabular}
\end{table}

Simulation results for the $3\times 3$ case are presented in Fig.~\ref{fig:CER_LSTIC3_q7} where we construct LSTIC from the $3\times 3$ perfect STBC with two principal ideals generated by $\beta_1=((\omega - 1)\theta^2 + (\omega - 1)\theta - \omega + 2)$ and  $\beta_2=((-\omega + 1)\theta^2 - (\omega - 1)\theta + 2\omega - 1)$. From Table~\ref{tbl:LSTIC3}, we learn that both $\beta_1$ and $\beta_2$ correspond to $p=7$ and we have $\beta_1\beta_2\OL=7\OL$. Hence, the overall codebook corresponds to \eqref{eqn:SP_code_matrix_principal} with $x_0,x_1,x_2\in\OL/3\OL$. Fig.~\ref{fig:CER_LSTIC3_q7} indicates that by revealing either of the message to the receiver, one obtains a roughly 10.5 dB SNR reduction. On the other hand, our analysis shows that the SNR reduction one can expect is roughly
\begin{equation*}
    \frac{1}{9} 10\log_{10}\left(\frac{5.9\times 10^{10}}{652428}\right) + \frac{1}{3} 10\log_{10}\left( 343 \right) \approx 13.95~\text{dB},
\end{equation*}
where the parameters inside the first and second logarithms are corresponding to gains in $N_{\mc{C}}$ and $\delta(\mc{C})$, respectively. The difference between the simulation results and our analysis is largely due to the fact that the SNR gain is measured at $10^{-4}$ CER, which is far from the asymptotic regime for a $3\times 3$ STBC. This is evident by observing that the CER curves have not even exhibited the promised diversity order of 9.

\begin{figure}
    \centering
    \includegraphics[width=4.5in]{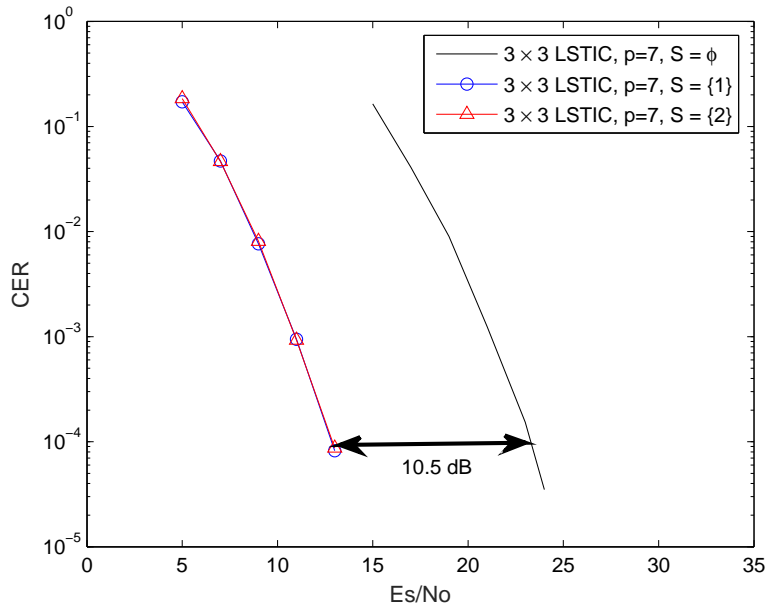}
    \caption{CER performance for the proposed LSTIC constructed from $3\times 3$ STBC.}
    \label{fig:CER_LSTIC3_q7}
\end{figure}

\section{LSTIC based on $4\times 4$ perfect STBC}\label{sec:4by4}
Let $\zeta_{15}$ be the 15th root of unity and let $\theta \defeq \zeta_{15} + \zeta_{15}^{-1} = 2\cos\left(\frac{2\pi}{15}\right)$. Also, let $\KK=\Q(\msf{i})$ and let $\LL=\Q(\msf{i},\theta)$ the field extension of $\KK$ with $[\LL:\KK]=4$. Consider the cyclic division algebra
\begin{equation*}
    \mc{A} = (\LL/\KK,\sigma,\gamma) = \{x_0 + x_1\msf{e} + x_2\msf{e}^2 + x_3\msf{e}^3 | x_0,\ldots,x_3\in\LL\},
\end{equation*}
where $\sigma:\zeta_{15}+\zeta_{15}^{-1}\rightarrow \zeta_{15}^2+\zeta_{15}^{-2}$ and $\msf{e}^4 = \gamma \defeq i$. A $4\times 4$ perfect STBC is constructed from
\begin{equation*}
    \bar{\mc{A}}_{\mfk{I}} = \{\alpha x_0 + \alpha x_1\msf{e} + \alpha x_2\msf{e}^2 + \alpha x_3\msf{e}^3 | x_0,\ldots,x_3\in\OL\},
\end{equation*}
where $\alpha = (1-3\msf{i})+\msf{i}\theta^2$. The code will have the matrix form shown in \eqref{eqn:SP_code_matrix_principal}.

One can now follow Section~\ref{subsec:LSTIC_principal} to construct LSTIC based on $4\times 4$ perfect STBC. As a result, we have the following corollary.
\begin{corollary}
    For any $\mc{S}\subset\{1,\ldots,K\}$, the proposed LSTIC based on $4\times 4$ perfect STBC provides $\Gamma(\mc{C},\mc{S}) \geq 6$ dB/bits per real symbol. Moreover, if all $\mfk{q}_k$, $k\in\{1,\ldots,K\}$, are principal, then $\Gamma(\mc{C},\mc{S}) = 6$ dB/bits per real symbol.
\end{corollary}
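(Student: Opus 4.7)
The plan is to observe that this statement is a direct specialization of Theorem~\ref{thm:SI_gain_principal} applied to the specific cyclic division algebra $\mc{A}=(\LL/\KK,\sigma,\msf{i})$ of the $4\times 4$ perfect STBC, with $n=4$, $\KK=\Q(\msf{i})$, $\LL=\Q(\msf{i},\theta)$ where $\theta=2\cos(2\pi/15)$, and the principal ideal $\mfk{I}=\alpha\OL$ with $\alpha=(1-3\msf{i})+\msf{i}\theta^2$. So I would first verify the three hypotheses needed by Theorem~\ref{thm:SI_gain_principal}: (i) $\gamma=\msf{i}\in\Ok=\Zi$ and none of $\gamma,\gamma^2,\gamma^3$ is a norm from $\LL$ down to $\KK$, which is precisely the condition ensuring cyclic division algebra structure and is established in \cite{Oggier06perfect} as part of the original perfect STBC construction; (ii) $\mfk{I}=\alpha\OL$ is principal by construction; and (iii) $\KK=\Q(\msf{i})$ is one of the two allowed base fields.

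Once the hypotheses are in place, the proof proceeds by the same recipe used in Theorem~\ref{thm:SI_gain_principal}. I would split each message $w_k=(w_{k,0},w_{k,1},w_{k,2},w_{k,3})\in\mbb{B}_{q_k}^4$ across the four layers, encode layer $\ell$ as $x_\ell=\mc{M}(w_{1,\ell},\ldots,w_{K,\ell})\in\OL/\mfk{q}$, and place the $x_\ell$'s into the matrix template \eqref{eqn:SP_code_matrix_principal} with the diagonal prefactor $D(\alpha)$. Conditioning on $w_\mc{S}=v_\mc{S}$, the argument around \eqref{eqn:x_given_SI_principal} shows $x_\ell^\mc{S}-\xi_\ell^\mc{S}\in\Pi_{s\in\mc{S}}\mfk{q}_s$. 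After subtracting the known $\mbf{V}^\mc{S}$, the residual code $\tilde{\mbf{X}}^\mc{S}$ is of the form $D(\alpha)\cdot\mc{C}_{\Pi_{s\in\mc{S}}\mfk{q}_s}$, so Lemma~\ref{lma:bounds_delta_I} combined with $|\det D(\alpha)|^2=|N_{\LL/\KK}(\alpha)|^2=N(\alpha)$ (using that $\KK/\Q$ is quadratic) gives
\begin{equation*}
\delta(\mc{C}_\mc{S})\geq N(\alpha)\prod_{s\in\mc{S}}q_s,\qquad \delta(\mc{C})=N(\alpha).
\end{equation*}
Plugging these together with the per-message rate $R_k=\tfrac{1}{8}\log_2(q_k)$ (from \eqref{eqn:R_k_principal} with $n=4$) into the definition \eqref{eqn:side_info_gain} of $\Gamma(\mc{C},\mc{S})$ collapses, as in Theorem~\ref{thm:SI_gain_principal}, to exactly $6$ dB/bits per real symbol on the lower bound.

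For the equality claim, when every $\mfk{q}_k$ is principal, the product $\Pi_{s\in\mc{S}}\mfk{q}_s$ is principal as well; picking a generator $\beta$ of this product exhibits an element in $\Pi_{s\in\mc{S}}\mfk{q}_s$ whose $\LL/\Q$ norm equals the ideal norm, so Lemma~\ref{lma:bounds_delta_I} is tight and $\Gamma(\mc{C},\mc{S})=6$ exactly. The main (and essentially only) obstacle is the verification of the algebraic hypotheses above — in particular that $\gamma=\msf{i}$ is non-norm on $\LL$ so the cyclic algebra is a division algebra — but these are inherited wholesale from the established construction of the $4\times 4$ perfect STBC in \cite{Oggier06perfect}, so no substantial new computation is required and the corollary reduces to bookkeeping specialized from the proof of Theorem~\ref{thm:SI_gain_principal}.
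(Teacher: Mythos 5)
Your proposal is correct and takes the same approach the paper intends: the corollary is indeed a direct specialization of Theorem~\ref{thm:SI_gain_principal} to $n=4$, $\KK=\Q(\msf{i})$, $\LL=\Q(\msf{i},\theta)$, and the principal ideal $\mfk{I}=\alpha\OL$ with $\alpha=(1-3\msf{i})+\msf{i}\theta^2$, and the paper itself offers no separate proof beyond pointing back to Section~\ref{subsec:LSTIC_principal}. Your observation that the equality case must be stated conditionally on all $\mfk{q}_k$ being principal (unlike the $2\times2$ and $3\times3$ corollaries) is exactly the right point to flag, since $\OL=\Zi[\theta]$ is not a PID here as evidenced by the non-principal factorizations at $p=3,5,29,89$ in Table~\ref{tbl:LSTIC4}.
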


\subsection{Examples and Simulation Results}
Here, we factorize each prime $p<100$ into prime ideals via Magma. In Table~\ref{tbl:LSTIC4}, we show ideals and their inertial degrees $f$. The ramification index of each prime ideal lying above $p$ is given by
\begin{equation*}
    e = \left\{
            \begin{array}{ll}
              2, & \hbox{$p = 2, 3$;} \\
              4, & \hbox{$p=5$;} \\
              1, & \hbox{otherwise.}
            \end{array}
          \right.
\end{equation*}
This can be justified by observing that
\begin{equation*}
    d_{\LL} = 2^8 3^4 5^6,
\end{equation*}
which has prime factors 2, 3, and 5. Also, note that in this case, $p=3, 5, 29, 89$ are factorized into non-principal prime ideals.
\begin{table}
\centering
\caption{Prime factorization of $p<100$ in $\mbb{Z}[\msf{i}][\theta]$ where $\theta=\zeta_{15} + \zeta_{15}^{-1}$. For $p=29, 89$, we only list one of the eight ideals due to the space limitation; the other seven ideals can be obtained as the conjugates.}
\label{tbl:LSTIC4}
\begin{tabular}{|c|c|c|c|}
  \hline
  $p$ & $\mfk{p}$ & $f$\\
\hline
  2 & $(1+\msf{i})$ & 4 \\
  3 & $(3, (5\msf{i} + 2)\theta^3 + 7\msf{i}\theta^2 + (4\msf{i} + 4)\theta + 7\msf{i} + 7), (3, 2\theta^3 + 2\msf{i}\theta^2 + (7\msf{i} + 5)\theta + 8\msf{i} + 6)$ & 2\\
  5 & $(5, (14\msf{i} + 21)\theta^3 + (10\msf{i} + 1)\theta^2 + (12\msf{i} + 21)\theta + 4\msf{i} + 22)$ & 1\\
    & $(5, (20\msf{i} + 7)\theta^3 + (6\msf{i} + 9)\theta^2 + (21\msf{i} + 8)\theta + 8\msf{i} + 5)$ & \\
  7 & $((-\msf{i} + 1)\theta^3 + (3\msf{i} - 3)\theta - 2\msf{i} - 1), ((\msf{i} + 1)\theta^3 - (3\msf{i} + 3)\theta + 2\msf{i} - 1)$ & 4 \\
  11 & $(2\msf{i}\theta^3 + \msf{i}\theta^2 - 6\msf{i}\theta - \msf{i} + 1), (\msf{i}\theta^3 - 2\msf{i}\theta + 1)$ & 2 \\
     & $((\msf{i} + 1)\theta^3 + \msf{i}\theta^2 - (3\msf{i} + 4)\theta - \msf{i}), (\theta^3 - \theta^2 - 3\theta - \msf{i} + 3)$ & \\
  13 & $(2+3\msf{i}), (2-3\msf{i})$ & 4 \\
  17 & $(4+\msf{i}), (4-\msf{i})$ & 4 \\
  19 & $(i\theta^3 - (4\msf{i} - 1)\theta + \msf{i}), (\msf{i}\theta^3 - (1-\msf{i})\theta^2 - 3\msf{i}\theta - 2\msf{i} + 2)$& 2  \\
     & $(i\theta^3 + (\msf{i} + 1)\theta^2 - 3\msf{i}\theta - 2\msf{i} - 2), (\msf{i}\theta^3 - (4\msf{i} + 1)\theta +\msf{i})$  & \\
  23 & $((3\msf{i} + 3)\theta^3 - (9\msf{i} + 9)\theta + 2\msf{i} + 1), ((3\msf{i} - 3)\theta^3 - (9\msf{i} - 9)\theta + 2\msf{i} - 1)$ & 4  \\
  29 & $(29,(813\msf{i} + 779)\theta^3 + (812\msf{i} + 793)\theta^2 + (755\msf{i} + 41)\theta + 814\msf{i} + 5)$ & 1  \\
  31 & $(-2\msf{i}\theta^2 + 5\msf{i}), (2\msf{i}\theta^3 + 2\msf{i}\theta^2 - 6\msf{i}\theta - 3\msf{i}), (2\theta^3 - 8\theta + 1), (2\theta - 1)$ & 2  \\
  37 & $(6+\msf{i}), (6-\msf{i})$ & 4  \\
  41 & $((\msf{i} + 2)\theta^3 - (3\msf{i} + 6)\theta + \msf{i} + 1), ((2\msf{i} - 1)\theta^3 + (3-6\msf{i})\theta + \msf{i})$ & 2  \\
     & $ ((2\msf{i} + 1)\theta^3 - (6\msf{i} + 3)\theta + \msf{i} + 1), ((2\msf{i} + 1)\theta^3 - (6\msf{i} + 3)\theta + \msf{i})$  & \\
  43 & $((\msf{i} - 1)\theta^3 - (3\msf{i} - 3)\theta + 5\msf{i} + 4),((\msf{i} + 1)\theta^3 - (3\msf{i} + 3)\theta + 5\msf{i} -4)$ & 4 \\
  47 & $((3\msf{i} + 3)\theta^3 - (9\msf{i} + 9)\theta + 5\msf{i} - 2),((3\msf{i} - 3)\theta^3 - (9\msf{i} - 9)\theta + 5\msf{i} + 2)$ & 4\\
  53 & $(2+7\msf{i}), (2-7\msf{i})$ & 4 \\
  59 & $(-2\theta^3 - \theta^2 + 7\theta - 1), (-\msf{i}\theta^3 - \msf{i}\theta^2 + 2\msf{i}\theta + 4\msf{i}), (-\msf{i}\theta^2 + \msf{i}\theta + 4\msf{i}), (-\theta^3 + \theta^2 + 4\theta - 1)$ & 2 \\
  61 & $(\theta + \msf{i}), (\theta^3 - 4\theta + \msf{i} + 1), (-\theta^3 - \theta^2 + 3\theta + \msf{i} + 2), (\theta^2 + \msf{i} - 2)$ & 1 \\
     & $ (\theta - \msf{i}), (\theta^3 - 4\theta - \msf{i} + 1), (-\theta^3 - \theta^2 + 3\theta - \msf{i} + 2), (\theta^2 - \msf{i} - 2)$  & \\
  67 & $((5\msf{i} + 5)\theta^3 - (15\msf{i} + 15)\theta + 4\msf{i} + 1),((5\msf{i} - 5)\theta^3 - (15\msf{i} - 15)\theta + 4\msf{i} - 1)$ & 4 \\
  71 & $((3\msf{i} + 1)\theta^3 + (\msf{i} + 1)\theta^2 - (10\msf{i} + 3)\theta - \msf{i}), ((\msf{i} - 3)\theta^3 - 2\theta^2 + (9-4\msf{i})\theta + \msf{i} + 2)$ & 2 \\
     & $ ((\msf{i} + 3)\theta^3 + 2\theta^2 - (4\msf{i} + 9)\theta + \msf{i} - 2), ((3\msf{i} - 1)\theta^3 + (\msf{i} - 1)\theta^2 - (10\msf{i} - 3)\theta - \msf{i})$   &\\
  73 & $(3+8\msf{i}), (3-8\msf{i})$ & 4 \\
  79 & $((2\msf{i} + 1)\theta^3 - (7\msf{i} + 2)\theta + \msf{i}), ((\msf{i} + 2)\theta^3 - (\msf{i} - 1)\theta^2 - (3\msf{i} + 6)\theta + 3\msf{i} - 1)$ & 2 \\
     &  $ ((\msf{i} - 2)\theta^3 - (\msf{i} + 1)\theta^2 - (3\msf{i} - 6)\theta + 3\msf{i} + 1), ((1-2\msf{i})\theta^3 + (7\msf{i} - 2)\theta - \msf{i})$ &  \\
  83 & $((3\msf{i} - 3)\theta^3 - (9\msf{i} - 9)\theta + 7\msf{i} + 4), ((3\msf{i} + 3)\theta^3 - (9\msf{i} + 9)\theta + 7\msf{i} - 4)$ & 4 \\
  89 & $(89,(27\msf{i} + 82)\theta^3 + (31\msf{i} + 117)\theta^2 + (77\msf{i} + 7669)\theta + 7896\msf{i} + 7771)$ & 1 \\
  97 & $(9+4\msf{i}),(9-4\msf{i})$ & 4 \\
  \hline
\end{tabular}
\end{table}

\begin{figure}
    \centering
    \includegraphics[width=4.5in]{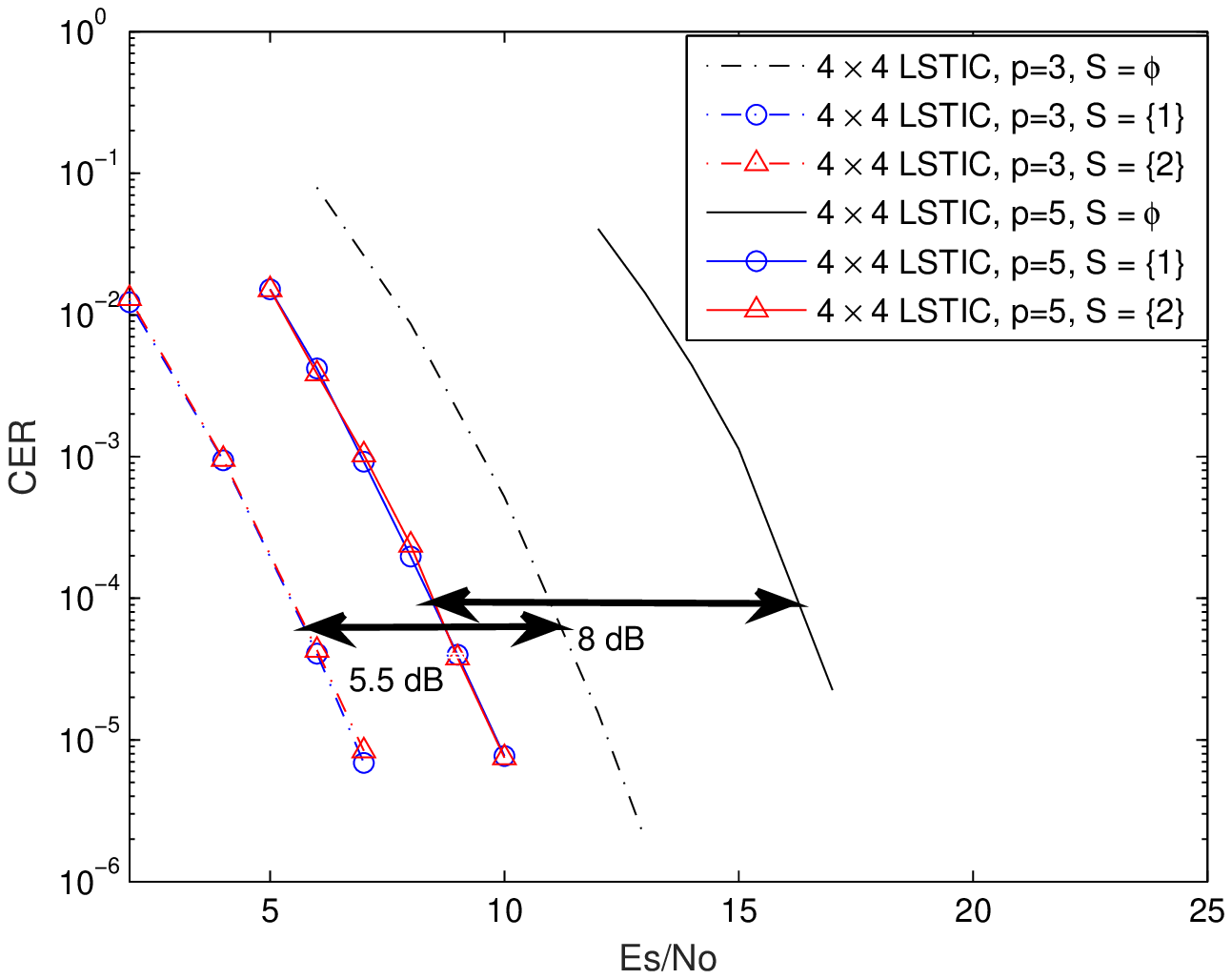}
    \caption{CER performance for the proposed LSTIC constructed from $4\times 4$ STBC.}
    \label{fig:CER_LSTIC4_q3_q5}
\end{figure}

In Fig.~\ref{fig:CER_LSTIC4_q3_q5}, two sets of simulation results are presented. Let us consider ideals $\mfk{I}_1 = (3, (5\msf{i} + 2)\theta^3 + 7\msf{i}\theta^2 + (4\msf{i} + 4)\theta + 7\msf{i} + 7)$ and $\mfk{I_2} = (3, 2\theta^3 + 2\msf{i}\theta^2 + (7\msf{i} + 5)\theta + 8\msf{i} + 6)$. From Table~\ref{tbl:LSTIC4} and the ramification index of $3$, we learn that $3\OL = \mfk{I}_1^2\mfk{I}_2^2$ where $N(\mfk{I}_1^2)=N(\mfk{I}_2^2)=81$. Moreover, with some computation, we have that $\mfk{I}_1^2$ and $\mfk{I}_2^2$ are principal ideals with generators $\beta_1=(\msf{i} + 1)\theta^3 - 3(\msf{i} + 1)\theta + 1$ and $\beta_2 = (\msf{i} - 1)\theta^3 - 3(\msf{i} - 1)\theta - 1$, respectively. In the first set, we construct LSTIC from $4\times 4$ perfect STBC with two principal ideals corresponding to $p=3$ generated by $\beta_1$ and $\beta_2$, respectively. Each message consists of four sub-messages from $\mbb{Z}_{81}$ and the overall codebook corresponds to the one in \eqref{eqn:SP_code_matrix_principal} with $x_0,x_1,x_2,x_3\in\OL/3\OL$. Fig.~\ref{fig:CER_LSTIC4_q3_q5} indicates a roughly 5.5 dB SNR gain by revealing either message to the receiver. We note that the analysis predicts a roughly
\begin{equation*}
    \frac{1}{16} 10\log_{10}\left(\frac{4.89\times 10^9}{9099}\right) + \frac{1}{4} 10\log_{10}\left( 81 \right) \approx 8.35~\text{dB},
\end{equation*}
where again the parameters inside the first and second logarithms are corresponding to gains in $N_{\mc{C}}$ and $\delta(\mc{C})$, respectively.

In the second set of simulations, we consider ideals $\mfk{I}_1=(5, (14\msf{i} + 21)\theta^3 + (10\msf{i} + 1)\theta^2 + (12\msf{i} + 21)\theta + 4\msf{i} + 22)$ and $\mfk{I}_2=(5, (20\msf{i} + 7)\theta^3 + (6\msf{i} + 9)\theta^2 + (21\msf{i} + 8)\theta + 8\msf{i} + 5)$ that correspond to $p=5$. Again from~\ref{tbl:LSTIC4} and the ramification index of $5$, we learn that $5\OL = \mfk{I}_1^4\mfk{I}_2^4$ where $N(\mfk{I}_1^2)=N(\mfk{I}_2^2)=625$. We have that $\mfk{I}_1^4$ and $\mfk{I}_2^4$ are principal ideals with generators $\beta_1=2\msf{i}-1$ and $\beta_2 = 2\msf{i}+1$, respectively. We again construct LSTIC from $4\times 4$ perfect STBC with two principal ideals generated by $\beta_1$ and $\beta_2$, respectively. Simulation result in Fig.~\ref{fig:CER_LSTIC4_q3_q5} shows a roughly 8 dB SNR gain obtained by revealing either message to the receiver. We again note that the analysis predicts a SNR gain of roughly
\begin{equation*}
    \frac{1}{16} 10\log_{10}\left(\frac{4.65\times 10^{14}}{2.18\times 10^6}\right) + \frac{1}{4} 10\log_{10}\left( 625 \right) \approx 12.19~\text{dB}.
\end{equation*}
In both the cases, one observes that there is a difference between the simulation results and the analysis. This again can be explained by that the CER where we measure the side information gain is far from the asymptotic regime for a $4\times 4$ STBC, which is again evident by observing that the CER curves have not exhibited the promised diversity order of 16.

\section{LSTIC based on $6\times 6$ perfect STBC}\label{sec:6by6}
Let $\zeta_{28}$ be the 28th root of unity and let $\theta \defeq \zeta_{28} + \zeta_{28}^{-1} = 2\cos\left(\frac{\pi}{14}\right)$. Also, let $\KK=\Q(\omega)$ and let $\LL=\Q(\omega,\theta)$ the field extension of $\KK$ with $[\LL:\KK]=6$. Consider the cyclic division algebra
\begin{equation*}
    \mc{A} = (\LL/\KK,\sigma,\gamma) = \{x_0 + x_1\msf{e}+\ldots + x_5\msf{e}^5 | x_0,\ldots,x_5\in\LL\},
\end{equation*}
where $\sigma:\zeta_{28}+\zeta_{28}^{-1}\rightarrow \zeta_{28}^5+\zeta_{28}^{-5}$ and $\msf{e}^6 = \gamma \defeq -\omega$. A $6\times 6$ perfect STBC is constructed from
\begin{equation*}
    \bar{\mc{A}}_{\mfk{I}} = \{x_0 + x_1\msf{e}+\ldots + x_5\msf{e}^5 | x_0,\ldots,x_5\in\mfk{I}\},
\end{equation*}
where $\mfk{I}$ is such that $7\OL = \mfk{I}^6\bar{\mfk{I}}^6$.

One can now follow Section~\ref{subsec:LSTIC_nonprincipal} to construct LSTIC based on $6\times 6$ perfect STBC. As a result, we have the following corollary.
\begin{corollary}
    For any $\mc{S}\subset\{1,\ldots,K\}$, the side information gain achieved by the proposed LSTIC based on $6\times 6$ perfect STBC with non-principal ideal $\mfk{I}$ is lower bounded as
    \begin{equation*}
        \Gamma(\mc{C},\mc{S}) \geq 6 + \gamma_{\mfk{I}}~\text{dB/bits per real symbol},
    \end{equation*}
    where $\gamma_{\mfk{I}}$ is as shown in \eqref{eqn:gamma_I}.
\end{corollary}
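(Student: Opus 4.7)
The plan is to recognize this corollary as a direct specialization of Theorem~\ref{thm:SI_gain_nonprincipal} to the case $n=6$, $\KK=\Q(\omega)$, $\LL=\Q(\omega,\theta)$ with $\theta=\zeta_{28}+\zeta_{28}^{-1}$, and $\gamma=-\omega$. First I would verify that the hypotheses of Theorem~\ref{thm:SI_gain_nonprincipal} are satisfied: the $6\times 6$ perfect STBC is built from the cyclic division algebra $\mc{A}=(\LL/\KK,\sigma,-\omega)$ where $-\omega\in\Zw=\Ok$, and none of the powers $(-\omega)^k$ for $k\in\{1,\ldots,5\}$ is the norm of an element of $\LL$ (this non-norm property is the classical ingredient of the $6\times 6$ perfect STBC construction of \cite{Oggier06perfect}). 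The ideal $\mfk{I}$ satisfying $7\OL=\mfk{I}^6\bar{\mfk{I}}^6$ is prime and non-principal, so we are squarely in the setting of Section~\ref{subsec:LSTIC_nonprincipal}.

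Next, I would select the $K$ ideals $\mfk{q}_1,\ldots,\mfk{q}_K\subset\OL$ to be pairwise relatively prime and each relatively prime to $\mfk{I}$. Since $\mfk{I}$ lies above $7$, this means choosing the $\mfk{q}_k$ as products of prime ideals above rational primes distinct from $7$, exactly as in the tables of Sections~\ref{sec:L_GIC}--\ref{sec:4by4}. This guarantees that the isomorphism chain
\[
\mfk{I}/\mfk{I}\mfk{q}\;\cong\;\OL/\mfk{q}\;\cong\;\mbb{B}_{q_1}\times\cdots\times\mbb{B}_{q_K}
\]
holds by the second isomorphism theorem combined with CRT, so the encoding map $\mc{M}$ of Section~\ref{subsec:LSTIC_nonprincipal} is well-defined for our $6\times 6$ STBC.

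At this point the corollary follows by invoking Theorem~\ref{thm:SI_gain_nonprincipal} verbatim. For any side information set $\mc{S}$, conditioning on $w_{\mc{S}}$ forces each residual layer signal $\tilde{x}_{\ell}^{\mc{S}}$ into a fixed coset of $\mfk{I}\prod_{s\in\mc{S}}\mfk{q}_s$; after the receiver subtracts the known matrix $\mbf{V}^{\mc{S}}$, the relevant codebook becomes $\mc{C}_{\mfk{I}\prod_{s\in\mc{S}}\mfk{q}_s}$, to which Lemma~\ref{lma:bounds_delta_I} applies. This yields the lower bound
\[
\delta(\mc{C}_{\mc{S}})\;\geq\;N(\mfk{I})\,\prod_{s\in\mc{S}}q_s,
\]
while the same lemma bounds the baseline by $\delta(\mc{C})\leq\min_{x\in\mfk{I}}N_{\LL/\Q}(x)$. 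Plugging these into the definition \eqref{eqn:side_info_gain} with the rate $R_k=\frac{1}{2n}\log_2(q_k)$ for $n=6$, the $q_k$-dependent factors collapse to $6$ dB/bits per real symbol and the remaining $\mfk{I}$-dependent factor is precisely the quantity $\gamma_{\mfk{I}}$ defined in \eqref{eqn:gamma_I}.

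The only nontrivial obstacle is really a bookkeeping one: the message sizes $q_k$ must be chosen so that the associated ideals $\mfk{q}_k$ are relatively prime to $\mfk{I}$, which excludes contributions from the prime $7$ when factoring rational primes in $\OL=\Zw[\theta]$. Everything else is a direct transcription of the proof of Theorem~\ref{thm:SI_gain_nonprincipal}, since the crucial division-algebra ingredients (the non-norm property of $-\omega$ and the applicability of Lemma~\ref{lma:bounds_delta_I}) are furnished by the classical construction of the $6\times 6$ perfect STBC.
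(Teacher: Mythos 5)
Your proposal is correct and follows the same route as the paper, which states the corollary as a direct consequence of Theorem~\ref{thm:SI_gain_nonprincipal} applied to the $6\times 6$ perfect STBC (cyclic division algebra $(\LL/\KK,\sigma,-\omega)$ with $\LL=\Q(\omega,\zeta_{28}+\zeta_{28}^{-1})$ and the non-principal prime $\mfk{I}$ above $7$). One small imprecision: the ``remaining $\mfk{I}$-dependent factor'' after collapsing the $q_k$ terms is $\gamma_{\mfk{I},\mc{S}} = 20\log_{10}\bigl(N(\mfk{I})/\min_{x\in\mfk{I}}N_{\LL/\Q}(x)\bigr)\big/\sum_{s\in\mc{S}}\log_2(q_s)$, not $\gamma_{\mfk{I}}$ itself; since this quantity is nonpositive and the denominator is at least one, one has $\gamma_{\mfk{I},\mc{S}}\geq\gamma_{\mfk{I}}$, which is the last step recorded in the paper's proof of Theorem~\ref{thm:SI_gain_nonprincipal} and which you should make explicit rather than asserting equality.
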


In Table~\ref{tbl:LSTIC6}, we again factorize each prime $p<100$ into prime ideals via Magma. We show ideals and their inertial degrees $f$. The ramification index of each prime ideal lying above $p$ is given by
\begin{equation*}
    e = \left\{
            \begin{array}{ll}
              2, & \hbox{$p = 2, 3$;} \\
              6, & \hbox{$p=7$;} \\
              1, & \hbox{otherwise.}
            \end{array}
          \right.
\end{equation*}
This can be justified by observing that
\begin{equation*}
    d_{\LL} = 2^{12} 3^6 7^{10},
\end{equation*}
which has prime factors 2, 3, and 7. In this case, for $p<100$, we note that $p=3, 7, 19, 31$ are factorized into non-principal prime ideals.
\begin{table}
\centering
\caption{Prime factorization of $p<100$ in $\mbb{Z}[\omega][\theta]$ where $\theta=\zeta_{28} + \zeta_{28}^{-1}$.}
\label{tbl:LSTIC6}
\begin{tabular}{|c|c|c|c|}
  \hline
  $p$ & $\mfk{p}$ & $f$ \\
\hline
  2 & $(\theta^4 - 5\theta^2 + \theta + 5)$ & 6 \\
  3 & $(3, (7\omega + 6)\theta^5 + (3\omega + 6)\theta^4 + (3\omega + 2)\theta^3 + (3\omega + 8)\theta^2 + (8\omega + 4)\theta + 3\omega + 5)$ & 3 \\
    & $(3, (7\omega + 8)\theta^5 + (\omega + 1)\theta^4 + (\omega + 4)\theta^3 + (7\omega + 7)\theta^2 + (3\omega + 7)\theta + 5\omega + 6)$ & \\
  5 & $((\omega - 1)\theta^5 + (\omega - 2)\theta^4 + (4-5\omega)\theta^3 + (9-5\omega)\theta^2 + (5\omega - 2)\theta + 6\omega - 8)$ & 6 \\
    & $((1-\omega)\theta^5 + (1-2\omega)\theta^4 + (4\omega - 5)\theta^3 + (9\omega - 5)\theta^2 + (5-2\omega)\theta - 8\omega + 6)$ & 6 \\
  7 & $(7, (32\omega + 15)\theta^5 + (22\omega + 21)\theta^4 + (23\omega + 14)\theta^3 + (44\omega + 10)\theta^2 + (18\omega + 21)\theta + 3\omega + 20)$ & 1 \\
    & $(7, (36\omega + 12)\theta^5 + (22\omega + 27)\theta^4 + (26\omega + 36)\theta^3 + (23\omega + 42)\theta^2 + (14\omega + 43)\theta + 9\omega + 29)$ & \\
  11 & $((\omega + 1)\theta^5 - (2\omega + 1)\theta^4 - (4\omega + 5)\theta^3 + (9\omega + 5)\theta^2 + (2\omega + 5)\theta - 8\omega - 6)$ & 6 \\
     & $((\omega + 1)\theta^5 + (2\omega + 1)\theta^4 - (4\omega + 5)\theta^3 - (9\omega + 5)\theta^2 +(2\omega + 5)\theta + 8+ 6) $ &  \\
  13 & $(\omega\theta^4 - (5\omega - 1)\theta^2 + 5\omega - 3), ((1-\omega)\theta^4 + (4\omega - 5)\theta^2 - 2\omega + 4), ((\omega - 1)\theta^4 + (5-4\omega )\theta^2 + 3\omega - 5)$ & 2 \\
     & $((\omega - 1)\theta^4 + (4-5\omega)\theta^2 + 5\omega - 2), (\omega\theta^4 - (4\omega + 1)\theta^2 + 2\omega + 2), ((1-\omega)\theta^4 + (5\omega - 4)\theta^2 - 5\omega + 3)$ &  \\
  17 & $((\omega - 1)\theta^4 - (2\omega + 1)\theta^3 + (5-2\omega)\theta^2 + (3\omega + 3)\theta + 2\omega - 5)$ & \\
     & $((\omega - 1)\theta^5 + 3\theta^4 + (3-6\omega)\theta^3 + (\omega - 13)\theta^2 + (8\omega +1)\theta - 4\omega + 10) $ & 6 \\
  19 & $(19, (61\omega + 187)\theta^5 + (107\omega + 256)\theta^4 + (123\omega + 152)\theta^3 + (87\omega + 168)\theta^2 + (100\omega + 76)\theta + 172\omega + 278)$ & 3 \\
     & $(19, (89\omega + 144)\theta^5 + 176\omega\theta^4 + (198\omega + 167)\theta^3 + (42\omega + 90)\theta^2 + (42\omega + 214)\theta + 134\omega + 293)$ & \\
     & $(19, (103\omega + 89)\theta^5 + (27\omega + 254)\theta^4 + (229\omega + 360)\theta^3 + (296\omega + 260)\theta^2 + (100\omega + 197)\theta + 61\omega + 239)$ & \\
     & $(19, (158\omega + 258)\theta^5 + (98\omega + 84)\theta^4 + (119\omega + 3)\theta^3 + (159\omega + 249)\theta^2 + (234\omega + 184)\theta + 28\omega + 19)$ & \\
  23 & $((3-2\omega)\theta^4 - (\omega + 1)\theta^3 + (9\omega - 16)\theta^2 + (\omega + 6)\theta - 9\omega + 16)$ & \\
     & $((3-\omega)\theta^4 + (2\omega - 1)\theta^3 + (7\omega - 16)\theta^2 + (6-7\omega)\theta - 7\omega + 16)$  & 6 \\
  29 & $(\theta^5 - \theta^4 - 5\theta^3 + 4\theta^2 + 5\theta - 3), (\theta^4 - \theta^3 - 5\theta^2 + 3\theta + 4), (\omega\theta^2 - \omega\theta - 3\omega)$ & 2 \\
     & $(\omega\theta^2 + \omega\theta - 3\omega), (\theta^4 + \theta^3 - 5\theta^2 - 3\theta + 4), (\theta^5 + \theta^4 - 5\theta^3 - 4\theta^2 + 5\theta + 3)$ & \\
  31 & $(31, (724\omega + 833)\theta^5 + (545\omega + 827)\theta^4 + (656\omega + 170)\theta^3 + (771\omega + 171)\theta^2 + (715\omega + 907)\theta + 680\omega + 916)$ & 3 \\
     & $(31, (45\omega + 21)\theta^5 + (266\omega + 398)\theta^4 + (942\omega + 59)\theta^3 + (506\omega + 472)\theta^2 + (43\omega + 69)\theta + 210\omega + 417)$ & \\
     & $(31, (927\omega + 236)\theta^5 + (56\omega + 700)\theta^4 + (151\omega + 808)\theta^3 + (525\omega + 9)\theta^2 + (749\omega + 157)\theta + 951\omega + 828)$ & \\
     & $(31, (194\omega + 143)\theta^5 + (848\omega + 7)\theta^4 + (305\omega + 255)\theta^3 + (521\omega + 168)\theta^2 + (378\omega + 357)\theta + 861\omega + 890)$ & \\
  37 & $((1-\omega )(\theta^5 -\theta^4 -5\theta^3) + (5-6\omega)(\theta^2 +\theta) + 5\omega - 3), ((1-\omega)\theta^4 + (\omega - 2)\theta^3 + (4\omega - 3)\theta^2 + (5-2\omega)\theta - 4\omega + 3)$ & 3 \\
     & $((\omega - 1)(\theta^5 -\theta^4 -5\theta^3) + (5\omega - 6)(\theta^2 +\theta) - 3\omega + 5), (\omega\theta^4 - (\omega + 1)\theta^3 - (4\omega - 1)\theta^2 + (2\omega + 3)\theta + 4\omega - 1)$ & \\
  41 & $(\theta^4 - \theta^3 - (4-\omega)\theta^2 - (\omega - 3)\theta - \omega + 2), (\theta^5 + \omega\theta^4 - (\omega + 5)\theta^3 - (5\omega + 1)\theta^2 + (3\omega + 5)\theta + 6\omega + 2)$ & 2 \\
     & $(\omega\theta^4 + (\omega - 1)\theta^3 - 4\omega\theta^2 + (3-3\omega)\theta + 4\omega - 2), (\omega\theta^4 - \theta^3 - (4\omega + 1)\theta^2 + (4-\omega)\theta + 2\omega + 3)$ & \\
     & $(\theta^3 - (\omega + 1)\theta^2 - 2\theta + 3\omega + 1), (\omega\theta^5 + \omega\theta^4 - 5\omega\theta^3 + (1-5\omega)\theta^2 + (5\omega + 1)\theta + 5\omega - 2)$ & \\
  43 & $((2-\omega )\theta^4 + (5\omega - 9)\theta^2 - 5\omega + 7), ((1-\omega)\theta^4 + (6\omega - 5)\theta^2 - 7\omega + 5), (\theta^4 + (\omega - 4)\theta^2 - 2\omega + 2)$ & 2 \\
     & $((2\omega - 1)\theta^4 + (5-9\omega)\theta^2 + 7\omega - 5), ((1-\omega)\theta^4 + (5\omega - 6)\theta^2 - 5\omega + 7), ((\omega - 1)\theta^4 + (4-3\omega)\theta^2 - 2)$ & \\
  47 & $((1-\omega)(2\theta^5 -2\theta^4 -7\theta^3 + 5\theta^2 + \theta + 8), ((\omega - 1)(2\theta^5 + 2\theta^4 -7\theta^3 -5\theta^2 + \theta - 8))$ & 6 \\
  53 & $(3\theta^4 + 5\theta^3 - 17\theta^2 - 13\theta + 17), ((\omega - 1)(3\theta^5 -2\theta^4  -20\theta^3 + 5\theta^2 + 30\theta + 8))$ & 6 \\
  59 & $(3\omega\theta^5 - 18\omega\theta^3 + 21\omega\theta + 2\omega), (3\omega\theta^5 - 18\omega\theta^3 + 21\omega\theta - 2\omega)$ & 6 \\
  61 & $(5-9\omega),(4-9\omega)$ & 6 \\
  67 & $(7\omega + 2),(2\omega + 7)$ & 6 \\
  71 & $((2-\omega)\theta^4 + (\omega - 1)\theta^3 + (5\omega - 9)\theta^2 + (4-3\omega)\theta - 5\omega+ 7), (-\theta^5 + \theta^4 + (6-\omega)\theta^3 - 6\theta^2 + (3\omega - 9)\theta - \omega + 8)$ & 2 \\
     & $(\theta^5 - \theta^4 - 5\theta^3 + (6-\omega)\theta^2 + (6-\omega)\theta + 2\omega - 7), (\theta^5 + \theta^4 - 5\theta^3 + (\omega - 6)\theta^2 + (6-\omega)\theta - 2\omega + 7)$ & \\
     & $((\omega - 1)(\theta^5 +\theta^4 -6\theta^2) + (5-6\omega)\theta^3 + (9\omega - 6)\theta + 8\omega - 7), (\omega\theta^4 - \theta^3 - 4\omega\theta^2 + 3\theta + 2\omega - 2)$ & \\
  73 & $(\omega + 8),(8\omega + 1)$ & 6 \\
  79 & $(7-10\omega),(3-10\omega)$ & 6 \\
  83 & $(\theta^5 - 5\theta^3 - \theta^2 + 6\theta + 4), ((2-2\omega)\theta^4 + (\omega - 1)\theta^3 + (9\omega - 9)\theta^2 + (2-2\omega)\theta - 8\omega + 8)$ & 2 \\
     & $((\omega - 1)(\theta^5 + \theta^4 -4\theta^3 -4\theta^2 + 2\theta + 4), (\theta^5 - 5\theta^3 + \theta^2 + 6\theta - 4)$ & \\
     & $(2\theta^4 + \theta^3 - 9\theta^2 - 2\theta + 8), (\theta^5 - \theta^4 - 4\theta^3 + 4\theta^2 + 2\theta - 4)$ & \\
  89 & $(5\omega\theta^4 - (2\omega + 1)\theta^3 + (1-18\omega)\theta^2 + (2-\omega)\theta + 18\omega -1)$) & 6 \\
     & $((3-2\omega)\theta^5 + (3-2\omega)\theta^4 + (17\omega - 23)\theta^3 + (15\omega -20)\theta^2 + (39-31\omega)\theta - 32\omega + 38)$ & 6 \\
  97 & $(\theta^2 + 2\omega - 4), (\omega\theta^4 - 4\omega\theta^2 + 2\omega + 2), ((1-\omega)\theta^4 + (5\omega - 5)\theta^2 - 3\omega + 5)$ & 2 \\
     & $(\theta^2 - 2\omega - 2), (\theta^4 - 4\theta^2 + 2\omega + 2), (\omega\theta^4 - 5\omega\theta^2 + 3\omega + 2)$ & \\
  \hline
\end{tabular}
\end{table}

\section{Layered Alamouti-Coded Index Coding}\label{sec:AlaIC}
In this section, we construct space-time index codes for $2\times 1$ MISO channel from Alamouti code \cite{alamouti98}. Alamouti code can be regarded as codes constructed over Hamilton quaternions  \cite{conway03quaternions}, the $\mbb{R}$-algebra of dimension 4 given by
\begin{align*}
    \mbb{H} = \{ a+b\mathsf{i}+c\mathsf{j}+d\mathsf{k} | a, b, c, d\in\mbb{R}\},
\end{align*}
where $\msf{i}^2=-1$, $\msf{j}^2=-1$, $\msf{k}^2=-1$, and $\msf{k}=\msf{ij}=-\msf{ji}$. We note that $\mbb{H}$ is a cyclic division algebra
\begin{equation*}
    \mbb{H} = (\Q(\msf{i})/\Q,\sigma,-1) = \{ x_0+\mathsf{j}x_1 | x_0, x_1\in\Q(i)\},
\end{equation*}
where $\sigma:\msf{i}\rightarrow -\msf{i}$ and $\lambda\msf{j} = \msf{j}\sigma(\lambda)$. This induces a layered structure of the Alamouti code. Now, consider $\bar{\mbb{H}}= \{ x_0+\mathsf{j}x_1 | x_0, x_1\in\Zi\}$, an Alamouti code corresponds to a finite subset of
\begin{equation*}
    \mc{C}_{\Zi} \defeq \left\{ \left.\begin{pmatrix}
                                         x_0 & -x_1^* \\
                                         x_1 &  x_0^* \\
                                       \end{pmatrix}
                                       \right| x_0, x_1\in\Zi
     \right\}.
\end{equation*}
Thus, Alamouti code does not belong to the family of codes considered in Section~\ref{sec:LSTIC} (which have base fields $\mbb{K}=\Q(\msf{i})$ or $\Q(\msf{j})$). Fortunately, one can follow the same approach and obtain Alamouti-coded index coding as follows.
\begin{align*}
    \mbb{H} = \{ x_0+\mathsf{j}x_1 | x_0, x_1\in\mbb{C}\}.
\end{align*}
which  In what follows, we propose and analyze the layered Alamouti-coded index code using an approach similar to that in Section~\ref{sec:LSTIC}.

Note that $\Zi$ is a principal ideal domain; so every ideal can be generated by a singleton. Let $\phi_1,\ldots,\phi_K$ be $K$ elements in $\Zi$ that are relatively prime. Also, define $q=\Pi_{k=1}^K \phi_k$ and define $N(\phi_k)=q_k$ for $k\in\{1,\ldots,K\}$ where $q_k$s are not necessarily primes. From CRT, we have
\begin{equation*}
    \Zi/q\Zi \cong \Zi/\phi_1\Zi\times\ldots\times\Zi/\phi_K\Zi \cong \mbb{B}_{q_1} \times\ldots\times\mbb{B}_{q_K},
\end{equation*}
where $\mbb{B}_{q_k}=\Zi/\phi_k\Zi$ is a commutative ring with size $q_k$. Let $\mc{M}$ be an isomorphism that maps the messages onto a complete set of coset leaders of $\Zi/q\Zi$ with minimum energy. For $k\in\{1,\ldots,K\}$, we enforce $w_k\in\mbb{B}_{q_k}^2$ which can be represented as $w_k=(w_{k,0},w_{k,1})$ where each $w_{k,\ell}\in\mbb{B}_{q_k}$. The encoder maps $w_{1,\ell},\ldots,w_{K,\ell}$ into the signal of the layer $\ell\in\{0,1\}$ as
\begin{equation*}
    x_{\ell} = \mc{M}(w_{1,\ell},\ldots,w_{K,\ell})\in \Zi/q\Zi,\quad \ell\in\{0,1\}.
\end{equation*}
The overall codebook becomes a subset of $\mc{C}_{\Zi}$ given by
\begin{equation}\label{eqn:AlaIndOverall}
    \mc{C} \defeq \left\{ \left.\begin{pmatrix}
                                         x_0 & -x_1^* \\
                                         x_1 &  x_0^* \\
                                       \end{pmatrix}
                                       \right| x_0, x_1\in\Zi/q\Zi
     \right\}.
\end{equation}

For the proposed layered Alamouti-coded index coding, we provide the following result without proof. The proof is essentially identical to the proof of Theorem~\ref{thm:SI_gain_principal}.
\begin{theorem}\label{thm:SI_gain_ACIC}
    For any $\mc{S}\subset\{1,\ldots,K\}$, the proposed Alamouti-coded index coding provides $\Gamma(\mc{C},\mc{S}) = 6$ dB/bits per real symbol.
\end{theorem}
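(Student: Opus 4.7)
The plan is to mirror the proof of Theorem~\ref{thm:SI_gain_principal} almost verbatim, with the simplification that the minimum determinant of an Alamouti subcode can be computed \emph{exactly} rather than lower-bounded via Lemma~\ref{lma:bounds_delta_I}, thanks to the explicit identity $\det(\mbf{X}) = |x_0|^2 + |x_1|^2$. First I would record the per-message rate: each Alamouti codeword carries $r = 4$ information-bearing real symbols and each $w_k \in \mbb{B}_{q_k}^2$, so $R_k = \tfrac{1}{2}\log_2 q_k$ bits per real symbol.

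Next, for a generic receiver holding side information $w_{\mc{S}} = \{w_s = v_s : s \in \mc{S}\}$, I would apply the layer-by-layer CRT argument of Section~\ref{subsec:LSTIC_principal}. Since $\mc{M}$ is a ring isomorphism, the $\ell$-th layer symbol splits as $x_\ell^{\mc{S}} = \xi_\ell^{\mc{S}} + \tilde{x}_\ell^{\mc{S}}$, where $\xi_\ell^{\mc{S}}$ is built from the known side-information components and $\tilde{x}_\ell^{\mc{S}} \in \bigcap_{s \in \mc{S}} \phi_s\Zi = \prod_{s \in \mc{S}} \phi_s\Zi$ (using the pairwise coprimality of the $\phi_s$). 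Subtracting the known Alamouti matrix formed from $(\xi_0^{\mc{S}},\xi_1^{\mc{S}})$ reduces the residual code to the sublattice Alamouti code with $x_0, x_1 \in \phi\Zi$, where $\phi \defeq \prod_{s \in \mc{S}} \phi_s$.

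The concrete computation is then the minimum determinant of this sublattice code. Substituting $x_0 = \phi y_0$ and $x_1 = \phi y_1$ with $y_0, y_1 \in \Zi$ gives
\begin{equation*}
    \det(\mbf{X}) = |\phi|^2\bigl(|y_0|^2 + |y_1|^2\bigr) = N(\phi)\bigl(|y_0|^2 + |y_1|^2\bigr),
\end{equation*}
whose minimum over $(y_0, y_1) \in \Zi^2 \setminus \{0\}$ is $N(\phi) = \prod_{s \in \mc{S}} q_s$, attained at Gaussian-integer units. Squaring per \eqref{eqn:min_det_C_lattice} yields $\delta(\mc{C}_{\mc{S}}) = \bigl(\prod_{s \in \mc{S}} q_s\bigr)^2$ with equality, whereas the same calculation over $\Zi$ gives $\delta(\mc{C}) = 1$. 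Plugging into \eqref{eqn:side_info_gain} with $n_t = 2$ telescopes to
\begin{equation*}
    \Gamma(\mc{C},\mc{S}) \;=\; \frac{10\log_{10}\bigl(\prod_{s \in \mc{S}} q_s\bigr)^2}{2\cdot \tfrac{1}{2}\sum_{s \in \mc{S}} \log_2 q_s} \;=\; 20\log_{10} 2 \;\approx\; 6~\text{dB/bits per real symbol.}
\end{equation*}

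No serious obstacle is anticipated: because $\Zi$ is automatically a PID and every sublattice $\phi\Zi$ is principal, the equality case of Theorem~\ref{thm:SI_gain_principal} applies. The one point demanding care is the bookkeeping of the square in \eqref{eqn:min_det_C_lattice}, which doubles the numerator and is exactly what promotes the constant from the naive $3$ dB to the promised $6$ dB.
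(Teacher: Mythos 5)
Your proof is correct, and the approach matches the paper's (the proof is omitted there, with only a note that it is essentially identical to that of Theorem~\ref{thm:SI_gain_principal}). You have correctly tracked the Alamouti-specific bookkeeping that does not carry over verbatim from the full-rate CDA template, namely that $r = 4$ rather than $2n^2 = 8$ real symbols per codeword (so $R_k = \tfrac{1}{2}\log_2 q_k$) and that the real base field $\KK = \Q$ makes $\delta(\mc{C}_{\mc{S}}) = \bigl|N_{\LL/\KK}(\phi)\bigr|^2 = N(\phi\Zi)^2$ rather than $N(\mfk{I})$; these two factor-of-two shifts cancel and reproduce the same $20\log_{10}2 \approx 6$~dB constant.
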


\subsection{Examples and Simulation Results}
Here, we list choices of $\phi_k$ lying above a prime $p<100$. In Table~\ref{tbl:ala}, we show principal ideals and their inertial degrees $f$. From $d_{\Q(\msf{i})}=4$, we know that the ramification index of each prime ideal lying above $p\neq 2$ is 1 and is 2 for prime ideals lying above $2$.
\begin{table}[h]
\centering
\caption{Prime factorization of $p<100$ in $\mbb{Z}[\msf{i}]$.}
\label{tbl:ala}
\begin{tabular}{|c|c|c|}
  \hline
  $p$ & $(\phi)$ & $f$\\
\hline
  2 & $(1+\msf{i})$ & 1 \\
  3 & (3) & 2 \\
  5 & $(1+2\msf{i}),(1-2\msf{i})$  & 1 \\
  7 & (7) & 2  \\
  11 & (11) & 2 \\
  13 & $(2+3\msf{i}),(2-3\msf{i})$ & 1 \\
  17 & $(1+4\msf{i}),(1-4\msf{i})$ & 1 \\
  19 & (19) & 2  \\
  23 & (23) & 2  \\
  29 & $(2+5\msf{i}),(2-5\msf{i})$ & 1 \\
  31 & (31) & 2  \\
  37 & $(1+6\msf{i}),(1-6\msf{i})$ & 1  \\
  41 & $(5+4\msf{i}),(5-4\msf{i})$ & 1  \\
  43 & (43) & 2 \\
  47 & (47) & 2 \\
  53 & $(2+7\msf{i}),(2-7\msf{i})$ & 1 \\
  59 & (59) & 2 \\
  61 & $(5+6\msf{i}),(5-6\msf{i})$ & 1 \\
  67 & (67) & 2 \\
  71 & (71) & 2 \\
  73 & $(3+8\msf{i}),(3-8\msf{i})$ & 1 \\
  79 & (79) & 2 \\
  83 & (83) & 2 \\
  89 & $(5+8\msf{i}),(5-8\msf{i})$ & 1 \\
  97 & $(4+9\msf{i}),(4-9\msf{i})$ & 1 \\
  \hline
\end{tabular}
\end{table}

\begin{figure}
    \centering
    \includegraphics[width=4.5in]{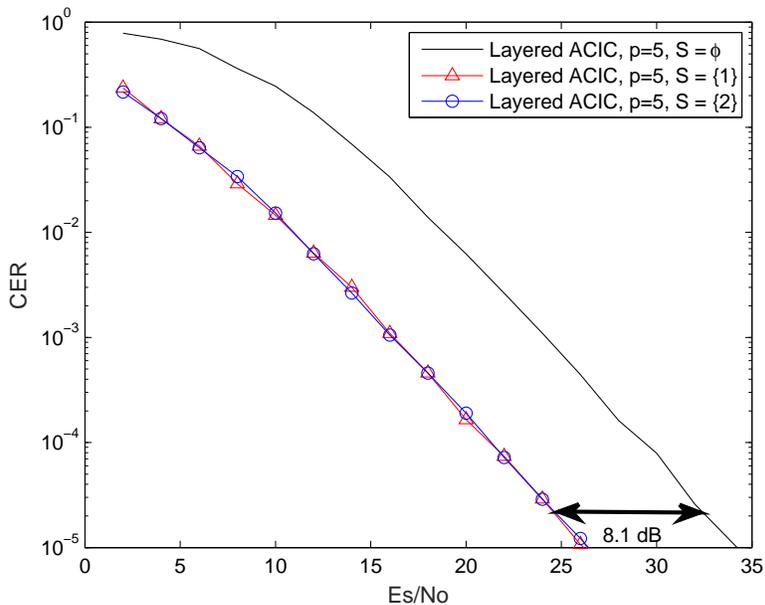}
    \caption{CER performance for the proposed layered Alamouti-coded index coding (ACIC).}
    \label{fig:CER_Ala_q5}
\end{figure}

Simulation results for using the proposed layered Alamouti-coded index coding over the $2\times 1$ MISO channel are provided in Fig.~\ref{fig:CER_Ala_q5}. In this figure, we construct the proposed layered Alamouti-index coding with two ideals generated by $\beta_1=1+2\msf{i}$ and $\beta_2=1-2\msf{i}$, respectively. From Table~\ref{tbl:ala}, we know that $5\Zi=\beta_1\beta_2\Zi$ and each ideal has norm equal to $p=5$. Each message consists of two sub-messages in $\Z_5$ and we encode the sub-messages of the same layer into the signal of that layer. The overall codebook becomes \eqref{eqn:AlaIndOverall} with $x_0,x_1\in\Zi/5\Zi$. The results in Fig.~\ref{fig:CER_Ala_q5} indicates a roughly 8.1 dB SNR gain when either message is revealed to the receiver. This can be accurately predicted by our analysis that revealing either message leads to an SNR gain given by
\begin{equation*}
    \frac{1}{2} 10\log_{10}\left(\frac{4}{2}\right) + \frac{1}{2} 10\log_{10}\left( 25 \right) \approx 8.49~\text{dB},
\end{equation*}
where $4$ and $2$ in the first logarithms are $N_{\mc{C}}$ and $N_{\mc{C}_{\mc{S}}}$, respectively, and $25$ inside the second logarithm corresponds to the gain in determinant.

\section{Conclusions}\label{sec:conclude}
In this paper, we have studied the problem of multicasting $K$ independent messages via MIMO links to multiple receivers where each of them already has a subset of messages as side information. A novel scheme, LSTIC, constructed over STBC has been proposed for exploiting side information without prior knowledge of the side information configuration. It has been shown that the proposed LSTIC possesses the nice property that for any possible side information the minimum determinant increases exponentially as the rate of the side information increases. Moreover, when constructed over perfect STBC, the perfect STBC properties are preserved by our construction and therefore the LSTIC is itself a perfect STBC. Examples including constructions of LSTIC over Golden code, $3\times 3$ perfect STBC, $4\times 4$ perfect STBC, $6\times 6$ perfect STBC, and Alamouti code have been provided and simulations have been conducted to corroborate our analysis.

\end{document}